\newcommand{\remove}[1]{}
\newtheorem{theorem}{Theorem}[section]
\newtheorem{lemma}{Lemma}[section]
\newtheorem{remark}{Remark}[section]
\def\e{\epsilon}
\def\E{{\mathbb{E}}}
\def\P{{\mathbb{P}}}
\DeclareMathOperator*{\argmax}{arg\,max}
\algnewcommand{\algorithmicgoto}{\textbf{go to}}%
\algnewcommand{\Goto}[1]{\algorithmicgoto~\ref{#1}}
\begin{document}
\title{Scheduling Policies for Stability and Optimal Server Running Cost in Cloud Computing Platforms}
	
	\author{
		\IEEEauthorblockN{Haritha~K, Chandramani~Singh}
		\IEEEauthorblockA{
			Department of Electronic Systems Engineering\\
			Indian Institute of Science \\
			Bangalore 560012, India \\
			Email: \{haritha, chandra\}iisc.ac.in}
		\thanks{The first author is supported by Visvesvaraya PhD Scheme and the second author by INSPIRE Faculty Research Grant~(DSTO-1363).}
	}

	\maketitle
	\begin{abstract}		
		
		We propose throughput and cost optimal job scheduling algorithms in cloud computing platforms offering Infrastructure as a Service. We first consider online migration and propose job scheduling algorithms to minimize job migration and server running costs. We consider algorithms that assume knowledge of job-size on arrival of jobs. We characterize the optimal cost subject to system stability. We develop a drift-plus-penalty framework based algorithm that can achieve optimal cost arbitrarily closely. Specifically, this algorithm yields a tradeoff between delay and costs. We then relax the job-size knowledge assumption and give an algorithm that uses readily offered service to the jobs. We show that this algorithm gives order-wise identical cost as the job size based algorithm. Later, we consider offline job migration that incurs migration delays. We again present throughput optimal algorithms that minimize server running cost. We illustrate the performance of the proposed algorithms and compare these to the existing algorithms via simulation.
	\end{abstract}
		\section{Introduction}
	Cloud computing has emerged as a successful paradigm for providing computing services to consumers, both enterprises and individuals. Cloud computing uses virtualization technology
	for efficient usage of computation and communication resources. Specifically, it dynamically creates virtual machines (VMs) from physical resources to suit to consumers' needs and
	allocates VMs on demand. Availability of low cost computers, servers, storage devices, and high capacity networks have further fuelled the growth of cloud computing clusters. According
	to Gartner Inc.~\cite{Gartner}, the worldwide public cloud services market is projected to grow $ 21.4 $\% to total \$$186.4$~billion by the year $ 2023 $.
	
	There has been substantial work on job scheduling algorithms to enhance cloud computing clusters' throughput and Quality of Service~(QoS).
	However, massive growth in cloud services has also led to a significant increase in  the energy consumption of the clusters. According to a United States data center energy usage report, this energy consumption is expected to rise annually by $ 4 $\% and would be around $ 90 $ billion kWh in $ 2025 $ in US data centers alone~\cite{US-DC-Energy}. Cloud computing data centers have also become major contributors to the global carbon emission footprints. The sources of energy consumption in cloud computing clusters include servers, cooling, communication, storage, and power distribution equipment~(PDE). So, the focus now has turned to algorithms to improve the energy efficiency of each of these components.
	
	We focus on minimization of server running and job migration costs in cloud computing centers. We model the power consumption at the server as the affine function of power consumed by the different types of jobs scheduled on the server. Energy efficient scheduling algorithms often switch off underutilized servers and stall the jobs in service in order to save on the server running
	cost. However, preempting jobs necessitates storing their states and again loading them when their services are resumed~(at potentially different servers).
	We refer to this process of stalling and resuming job services as job migration and the additional resources for storing and loading the states as job migration cost.
	We propose job scheduling algorithms that are throughput optimal and also minimize average server running and job migration costs in cloud computing centers.
	
	\subsection{Related Work}
	Resource allocation algorithms maintain QoS while economizing on operating costs have garnered significant attention in the context of data centers. Maguluri et al.~\cite{stheja2012loadbal,stheja-unknown} proposed {\it max-weight} based throughput optimal job scheduling algorithms in cloud computing clusters. They proposed both preemptive and non-preemptive scheduling algorithms. In~\cite{stheja2012loadbal}, they assumed availability of job size information whereas in~\cite{stheja-unknown}, they considered jobs with unknown sizes. Ghaderi~\cite{Ghaderi-Random} proposed randomized non-preemptive job scheduling algorithms to alleviate the complexity of max-weight based algorithms. To establish the throughput optimality of these algorithms, he relied on connection with loss systems. These algorithms do not require synchronization among servers but are limited to Poisson job arrivals and exponential service durations. Ghaderi et al.~\cite{Ghaderi2015Storms} studied scheduling of jobs, represented as graphs, over a large cluster of servers, again using connection with loss systems. Here also the authors proposed a non-preemptive randomized throughput optimal algorithm. Psychas and Ghaderi~\cite{Psychas:2017:NVS:3175501.3154493},~\cite{2018-Psychas} also proposed a parameterized class of non-preemptive scheduling algorithms that can support certain fractions of the capacity region. Further, the parameter can be tuned to provide a tradeoff between achievable throughput, delay, and computational complexity. The non-preemptive algorithms in these works do not incur any job migration cost, but also do not account for server running cost. 
	\par Psychas and Ghaderi~\cite{2021-Psychas} considered a more practical setting where job's resource requirements belong to a very large set of diverse types, or in the extreme case even infinitely many types. Note that, the classical max-weight algorithms assume finite set of job types and prove throughput optimality. In this work, the authors characterize a fundamental limit on the maximum throughput under the setting considered and develop non-preemptive scheduling algorithms, based on Best-Fit and Universal Partitioning. These algorithms achieve at least $ 1/2 $ and $ 2/3 $ of the maximum throughput respectively.	
	
	Stolyar and Zhong~\cite{stoyler2013binpack} considered VM placement in infinite server systems. They aimed at minimizing the total number of occupied servers at any time but disregarded VM migration cost. They proposed a greedy randomized algorithm that is asymptotically optimal as the job arrival rate approaches infinity. Shi et al.~\cite{Lei_} studied VM placement in finite server systems, modeling it as a vector bin packing problem. They first considered a finite set of jobs and gave an offline algorithm to minimize the number of occupied servers. Then they considered a dynamic setup and proposed an online algorithm to minimize job migration cost as well in addition to the server cost. In their dynamic setup, jobs that do not get required resource on arrival are not queued but leave the system. Wang et al.~\cite{Wang.M-stoch-bin-pack} considered VMs with normal distributed bandwidth requirements. They also aimed at minimizing the number of required servers and formulated it as a stochastic bin packing problem. They proposed an online algorithm yielding number of servers within $(1+\sqrt{2})$ of the optimum.
	
	Our proposed algorithms allow a tradeoff between delay, server running cost and job migration cost through tunable weighing parameters.
	
	Jiang et al.~\cite{JIang2012JointPlaceent} studied a joint VM placement and routing problem aiming to jointly optimize node~(e.g., switch) and link utilization costs. They used Markovian approximation to give an efficient online algorithm. Wang et al.~\cite{Wang2014GreenDCN} introduced a framework to jointly determine VM placement and traffic routing, aiming at minimization of power consumption of all the switches in the network. They argued that the joint problem is NP-hard and proposed a few principles to arrive at an efficient solution.
	
	Feng et al.~\cite{Feng2016DCNC} considered the problem of service distribution in distributed cloud networks. Here each service was mapped to a chain of VMs. They considered joint scheduling of processing and transmission, aiming at minimizing setup and usage related costs of the nodes and the links while stabilizing the network. To this end, they used {\it Lyapunov drift plus penalty} based approach. In~\cite{7511591}, they enhanced the earlier algorithm with a {\it shortest transmission-plus-processing distance} bias that improved delay without compromising throughput and cost.
	
	\par Wu et al.\cite{2019-Wu} and Li Jirui et al.~\cite{2021-Li-Jirui} studied VM placement problem aiming to optimize migration cost, power consumption at servers, and QoS in heterogeneous cloud clusters. The authors employed improved grouping genetic algorithm (IGGA)~\cite{2019-Wu}, improved genetic algorithms~\cite{2021-Li-Jirui} in designing the cost optimal algorithms.

	\par Wang et al.~\cite{Wang2015Optical} considered dynamic scheduling for optical switches in a flow based model of data centers. They took into account reconfiguration delays of the switches and derived a class of adaptive max-weight based algorithms that are provably throughput optimal. Wang et al.\cite{chang2018DCNC} extended this study to scheduling for switches in a network of geographically distributed data centers. They considered server running cost, switch reconfiguration cost and also reconfiguration delays. They also gave a class of throughput optimal algorithms that can be tuned to achieve a tradeoff between delay and the costs. Notice that our work is different from these in that we consider a non-persistent, job based model.
	
	Krishnasamy et al.~\cite{subhashini2017augment} studied joint base station~(BS) activation and rate allocation in cellular networks. They focused on algorithms that optimize (a)~BS activation cost and (b)~BS switching cost, subject to network stability. Our usage of Markov-static policies is closely related to the approach in this work.

	\subsection{Our Contribution}
	Following is a preview of our contribution.
	\begin{enumerate}
		\item We cast the joint system stability and cost minimization problem in a (stochastic) network cost minimization framework. In the online job migration model, we see that minimizing job migration cost requires keeping track of jobs' residual service requirements~(in number of time-slots).
		We introduce queue dynamics to appropriately capture this information.			
		\item We propose a job scheduling algorithm that uses the residual service requirement of the jobs in service at each scheduling epoch but does not need the arrival statistics.
		Our algorithm is based on drift plus penalty framework and, expectedly, prescribes configurations that are coupled across slots. We show that the proposed algorithm incurs a cost arbitrarily close to the optimal cost.

		\item In several scenarios, job size information may not be available on job arrivals. To address this, we propose another scheduling algorithm that merely needs to know the readily offered service to the jobs at the scheduling epochs. We show that this algorithm also achieves performance of the same order as the previous algorithm. In particular, we do not suffer~(order-wise) for not knowing jobs' sizes on their arrival.
		
		\item In the offline job migration model, we present a different system model that accounts for migration delay induced by the job migrations and present cost optimal throughput algorithms.
	\end{enumerate}	
	
	\paragraph*{Organization of the paper}
	We explain cloud computing paradigm Infrastructure as a Service~(IaaS) and different costs associated in Section~\ref{DCN-sec:IAAS}. 
	 We introduce system model for online migration cost and formulate the cost optimization problem in Section~\ref{DCN-sec:system-model}. In Section~\ref{DCN-sec:job size}, we first characterize the optimal average cost and then propose a job scheduling algorithm that achieves this cost arbitrarily closely. This algorithm requires jobs' size information on their arrival. In Section~\ref{DCN-sec:uk job size}, we propose another algorithm that achieves similar performance but merely needs to know the readily offered service to the jobs at the scheduling epochs. In Sections~\ref{DCN-sec:offline-mig-sys-model} and~\ref{DCN-sec:offline-opt-scheduling} we introduce offline migration system model and present throughput algorithms respectively. We demonstrate the performance of the proposed algorithms and compare these to the existing algorithms in Section~\ref{DCN-sec:simulations}. Finally, we conclude in Section~\ref{DCN-sec:conclusion}.
\section{CLOUD COMPUTING PLATFORMS}\label{DCN-sec:IAAS}
A cloud computing platform is a cluster of distributed computers providing on-demand computational resources or services to remote users over a network \cite{Iaas-LI}. IaaS is a cloud computing platform that provides users with computing resources such as storage, networking, and servers depending on users' requirements, e.g., Amazon EC2. Further, it can dynamically scale up the allocated resources as per users' demand. So, the users need not worry about procuring, installing, and maintenance of the resources. The users "pay as they use," i.e., they only pay for the time they have used the resources \cite{UnderstandingCC}, \cite{Buyya2009CloudCA}. The provisioning of these resources is in the form of VMs deployed on servers~(physical machines) in the data center. A VM is an abstract unit of storage and computing capacity provided in the cloud. VMs are classified according to the resources they request. For example, Table~\ref{table:Table 1} lists three types of VMs available in Amazon EC2~\cite{stheja2012loadbal}. We assume that the platform offers $M$ different types of VMs. 
\begin{table}
	\begin{center}
		\footnotesize		
		\begin{tabular}{|l | l | l | l|}
			\hline \rule{0pt}{15pt}
			Instance type & Memory &CPU & Storage\\ [0.5ex]
			\hline\hline\rule{0pt}{13pt}
			Standard Extra Large & $15$GB  & $8$EC2 units & 1690GB  \\
			\hline \rule{0pt}{13pt}
			High-Memory Extra Large & $ 17.1 $GB& $ 6.5 $EC2 units & $ 490 $ GB  \\
			\hline \rule{0pt}{13pt}
			High-CPU Extra Large & $ 7 $ GB& $ 20 $ EC2 units & $1690$ GB\\				
			\hline
		\end{tabular}
	\caption{ Three representative instances in Amazon EC2}
	\label{table:Table 1} 
	\end{center}
\vspace{-1cm}	
\end{table}
\par As the jobs arrive at the cloud, depending on their resource requirements VMs are assigned to the jobs, e.g., if a job requires more processing power then High-CPU Extra Large VM is assigned. Using resources, e.g., CPU, memory incurs power consumption at the servers, and scheduling a VM on a server incurs power consumption depending on VM's resource usage on the server. Now onwards we use the terms VM and job interchangeably. 
Below we describe the power consumption model of the servers.
\subsection{Server Power Consumption}
We use an additive server power model, where the power consumption at a server is an affine function of the number of VMs of different types that are in execution~\cite{Survey-Energyconsumption}. We explain this in the following. The total power consumption at a server is divided into two parts, static power consumption, and dynamic power consumption~\cite{Online-Li-2012}. The power consumed by the server irrespective of the number VMs scheduled on the server is called static power consumption and the power consumed by the VMs is called dynamic power consumption. The static power consumption is approximately $70\%$ of the full loaded server power consumption~\cite{Luo2014HybridSF}. Using this, the total power consumption at a server can be written as,
\begin{align}\label{DCN-eq:Power-Server}
P_{server} = P_{static}+\sum_{m=1}^Mn_mP_{VM}(m)
\end{align}
where $P_{server}, P_{static} $ and $P_{VM}(m)$ represent the total power consumption, static power consumption, power consumed by type-$m$ VM and $n_m$ is the total number of VMs of type-$m$ that are scheduled on the server. Further, power consumed by a type-$m$ VM can be written as,
\begin{align*}
P_{VM}(m) = P_{CPU}(m)+P_{mem}(m)+P_{storage}(m)
\end{align*}
where $ P_{CPU}(m), P_{mem}(m), P_{storage}(m)$ represent power consumed by the resources CPU, memory and storage.
\paragraph*{Binary server power model} For ease of exposition, in Sections~\ref{DCN-sec:uk job size} and \ref{DCN-sec:offline-opt-scheduling} we use a binary server power model wherein an ON server~(active server) consumes a fixed power irrespective of the number VMs scheduled. It is equivalent to considering only the static power consumption. In this case,
\begin{align} \label{DCN-eq:Binary-server-power}
P_{server} = 
\begin{cases}
P_{static} &\text{ if server is ON} \\
0& \text{ otherwise}
\end{cases}
\end{align}
%Further, the servers consumes additional power when it changes its state (ON-OFF or OFF-ON)~\cite{orgerie:hal-00770903}.
\paragraph*{Server consolidation} To ensure service reliability and resource availability for different applications, the servers in the cloud platforms are over provisioned with the resources. However, on average $30\%$ of servers remain idle and use $10$-$15\%$ of available resources \cite{Mig-Cost-Uddin}. This underutilization of resources remarkably increases the power consumption as each server consumes $P_{static}$ irrespective of the number of jobs scheduled on it, thus increasing operational expenditure in the cloud platforms. The jobs on the underutilized servers can be consolidated onto fewer servers to improve power efficiency and resource utilization \cite{Mig-Survey}. Specifically, packing the maximum number of jobs onto a minimum number of servers, and turning off the unnecessary servers reduces power consumption in cloud platforms is called server consolidation. During server consolidation, the jobs on underutilized servers are preempted and moved to the other servers, which is referred to as job migration. job migration incurs either additional cost on the cloud or additional delay in the execution depending on the type of migration as described below. 
 \subsection{Job Migration}
The jobs are migrated from one server to another server to achieve several objectives, e.g., server consolidation, load balancing, fault tolerance, server maintenance, and isolation of applications~\cite{Mig-Singh}. We call the server on which the job was in execution before migration as the source server and the one to which it moves after migration as the destination server. During job migration, the resource status such as CPU registers content and I/O states, etc., which are referred to as VM pages should be transferred from the source server to the destination server~\cite{YANG2014267}. During the transfer of VM pages, the job may not be executed on any of the source or destination servers for a certain amount of time and we call this downtime. If the jobs are delay sensitive, larger downtime can degrade jobs' performance. There are two types of job migration depending on the modalities of the migration process and the resulting downtime. 
\subsubsection{Offline migration} 
In offline migration, the job at the source server is paused and the corresponding VM pages are transferred to the destination server. After VM pages are transferred, the job resumes its execution at the destination server. Offline migration incurs substantial downtime which ranges from a few milliseconds to seconds depending on the network bandwidth and amount of data to be moved~\cite{mig-time}.
% In Section~\ref{DCN-sec:offline-mig-sys-model} we model offline migration. 

\begin{figure}[h!]
	\begin{center}
		\includegraphics[width=3in,height=1.25in]{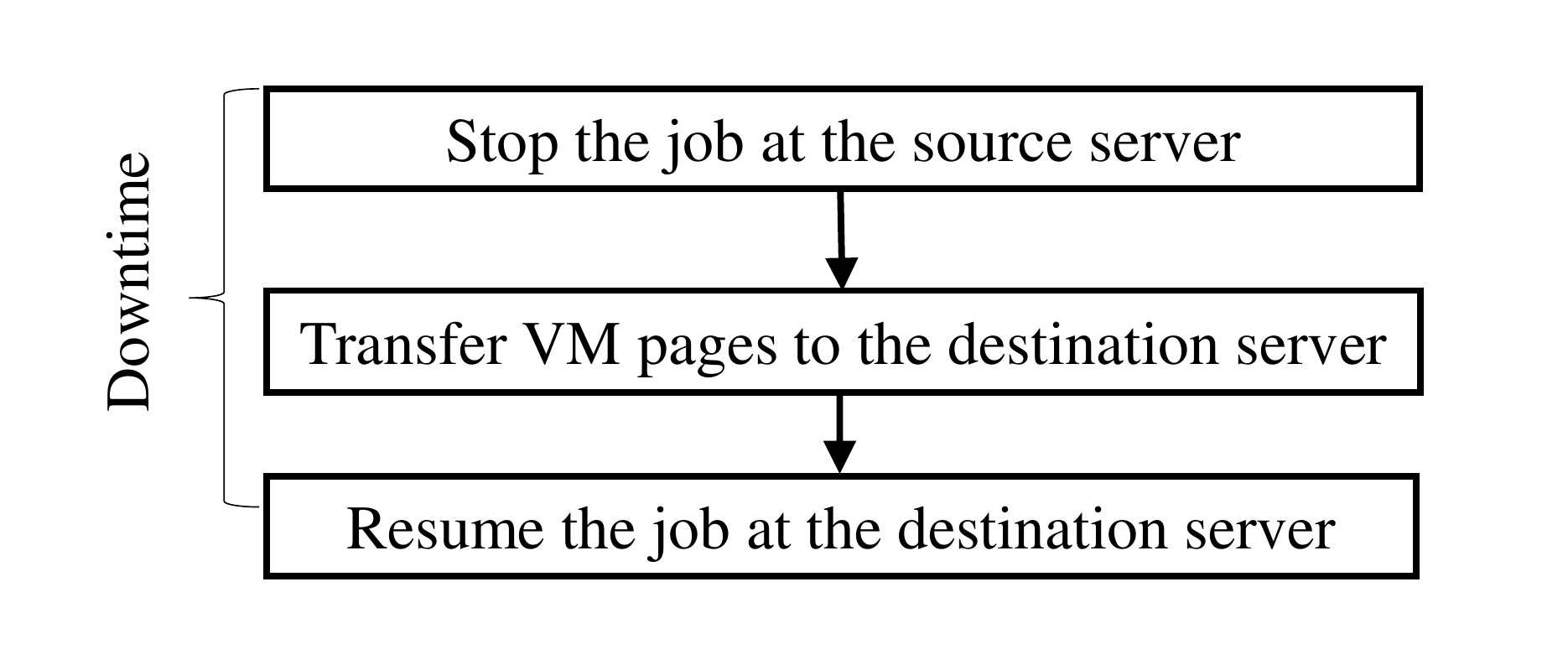}
		\caption{Basic steps that are followed in offline migration }
		\label{fig:offline migration}
	\end{center}
\end{figure}
\subsubsection{Online Migration}
Online migration leads to shorter downtime at the expense of more resource consumption. There are two types of online migrations~\cite{HOSSEIN-2020267}.
\paragraph*{Pre-copy migration} In this, the total VM pages are transferred from source server to the destination server before resuming the VM execution at the destination server. During the transfer process, if any of the pages that are already copied to the destination server are modified~(become 'dirty'), the source server has to send them again which consumes additional network bandwidth. This iterative copying continues for a limited number of rounds~\cite{HOSSEIN-2020267}. Then, the job is stalled on the source server and remaining VM pages are copied and the job execution is resumed at the destination server. In this model, the downtime is significantly small compared to offline migration as most of the VM pages are transferred before the job is moved to the destination server~(see Figure~\ref{fig:pre-copy migration}).
 \begin{figure}[h!]
 	\begin{center}
 		\includegraphics[width=3in,height=1.75in]{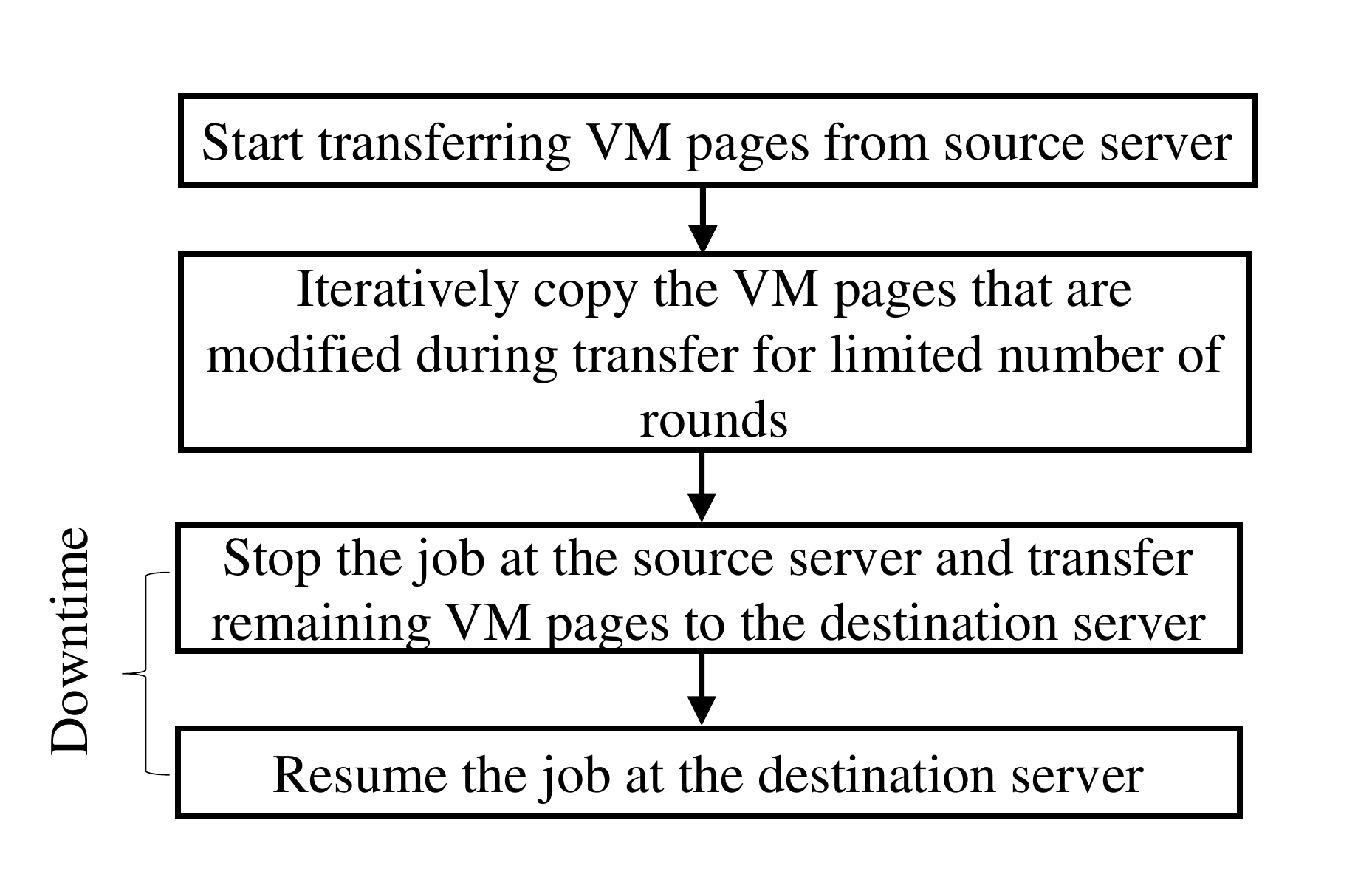}
 		\caption{Basic steps that are followed in pre-copy migration }
 		\label{fig:pre-copy migration}
 	\end{center}
 \end{figure}
\paragraph*{Post-copy migration} In this, the job at the source server is paused and the minimum state information required to resume its execution at the destination server is transferred. If the job at the destination server tries to access the information that is not yet copied, it encounters with page fault problem and too many page faults degrade the job's performance~\cite{JIN-201423}. In this case, the downtime is equal to the time taken to copy the minimum state information to the destination server.
\par Note that in online migration either the source server may have to send a VM page multiple times (in the case of pre-copy migration) or the job may encounter page-faults degrading the user experience (in the case of post-copy migration). We model the additional resource, e.g., network bandwidth consumption or the performance degradation as a fixed job migration cost (see Section~\ref{DCN-subsubsec:jmc}). Also, observe that online migration incurs less downtime than offline migration. Further, the pre-copy migration has less downtime compared to the offline migration. So, it is the preferred mode of migration in the case of delay sensitive applications. 
\par In the online migration model, we design scheduling algorithms that balance the tradeoff between server power consumption and job migration cost while ensuring QoS~(see Section~\ref{DCN-sec:system-model}). In the offline migration model, we design scheduling algorithms that optimize power consumption while ensuring QoS (see Section~\ref{DCN-sec:offline-opt-scheduling}).

\section{ONLINE MIGRATION: SYSTEM MODEL} \label{DCN-sec:system-model}
We consider a system consisting of $ L $ servers, each having a finite amount of resources (CPU, memory, storage etc.). A VM also is characterized by the amount of various resources possessed by it. We assume that the cloud computing platform offers $ M $ different types of VMs, distinguished by their resource requirements. Also, a server can host a number of VMs of different types depending on their resource requirements. To elaborate, let there be $ K $ types of resources. For each $ l \in [L] $, let server $ l$ posses $ C_{l,k} $ units of the $k$th resource.\footnote{For any $Z \in \mathbb{Z}_{++}, [Z] \coloneqq \{1,2,\cdots,Z\}$.} Further, for each $ m \in [M] $, let VM of type $ m $ require $ R_{m,k} $ units of the $k$th resource. Then server $ l$ can host $ \eta_{m} $ VMs of type $ m $, $ m \in [M]$, if
	\begin{equation*}
	\sum_{m \in [M]}\eta_mR_{m,k} \leq C_{l,k} \text{  } \forall k. 
	\end{equation*}
	We refer to $ \eta = (\eta_1 \dots , \eta_{M}) $ as a feasible VM configuration at server $ l$. Let $ \mathcal{W}^l $ be the set of all feasible VM configurations at server $ l$.
	A job is characterized by the type of VM it requires and also the duration for which the VM is required. The latter is measured in units of time-slots and is referred to as size of the job. We assume that jobs' maximum size is $S$. Also, in the following, we refer to a job as a type-$(m,s)$ job if it requires a VM of type-$m$ for $s$	slots.	
	
	\paragraph*{Job Arrival model}
	We assume that jobs arrive according to a discrete time process and i.i.d. across slots. We also assume that the number of jobs of any particular type and size arriving in a slot is bounded by $ A_{\text{max}} $. Let $ A_{m,s}(t) $ be the number of type-$(m,s) $ jobs the arrived at slot $ t $. Let $ \E[A_{m,s}(t)] = \lambda_{m,s} $, $ m \in [M],s \in [S] $.\footnote{We assume that Prob$(A_{m,s}(t)=0) > 0$ for all $ m \in [M],s \in [S] $. This ensures that the resulting queue length evolution process is irreducible and aperiodic.}
	
	\subsection{Queue Length Evolution}\label{DCN-subsec:online-mig-Q-evol}
	All the jobs arriving to the cloud computing platform are placed in queues until they finish their service. The authors in \cite{stheja2012loadbal},\cite{stheja-unknown} put all the jobs that require VMs of the same type in the same queue. They define configuration for a slot to be the number of VMs of different types offered at that slot.
	They also define a schedule to be a sequence of VM configurations, one for each slot. However, as we explain below, if we want to consider job migration cost, we require different queues for different job types, i.e., we require $MS$ queues. A type-$(m,s)$ job, if served at slot $t$, becomes a type-$(m,s-1)$ job at $t+1$. We let $(Q_{m,s}(t))_{m \in [M], s \in [S]} $ denote the queue lengths at $t$. We define a job service configuration at server $l$ as $N^l=(N^l_{m,s})_{m \in [M], s \in [S]}$, where $N^l_{m,s}$ denotes the number of VMs offered to type-$(m,s)$ jobs. We can define the set of all feasible job service configurations at server $l$ as follows,
	\begin{align} \label{DCN-eq:job-serv-config}
	\mathcal{N}^l:= \left\{ (N_{m,s})_{m \in [M],s \in [S]} : \left(\sum_{s\in [S]} N_{m,s}\right)_{m \in [M]}\in \mathcal{W}^l \right\}.
	\end{align}
	We let $N = (N^l, l \in [L])$ denote the job service configuration of the cloud computing cluster.
	Now a schedule is defined to be a sequence of job-service configurations $N(t)=(N^l_{m,s}(t)), t \geq 0$;
	$N^l_{m,s}(t)$ denotes the number of VMs offered by server $l$ to type-$(m,s)$ jobs at slot $t$. Let us also define $N^l_m(t)= \sum_sN^l_{m,s}(t)$, for all $ m \in [M] $. Further, let $N_{\max}$	 be the maximum number of VMs that any server can host. Let $\bar{N}(t)=(\bar{N}^l_{m,s}(t)), t \geq 0  $ be the job service configuration given by the scheduling algorithm. As there may not be enough jobs in the queues, the actual number of type-$(m,s)$ jobs served at server $l$ at slot $t$, say $N_{m,s}^l(t)$,
	may be less than $\bar{N}_{m,s}^l(t)$. In fact, the total number of type-$(m,s)$ jobs served at slot $t$ will be $\min\{Q_{m,s}(t), \sum_l \bar{N}_{m,s}^l(t)\}$.
	The queue lengths evolve as
	\begin{align} \label{DCN-eq:queue-evolution}
	Q_{m,s}(t+1)&=\left(Q_{m,s}(t)-\sum_l\bar{N}_{m,s}^l(t)\right)^+\nonumber\\
	+A_{m,s}(t)&+\min\left\{\sum_l\bar{N}_{m,s+1}^l(t), Q_{m,s+1}(t)\right\}.
	\end{align}
	The cloud system is called stable if $\lim_{t\to \infty}\sum_{m,s}\E[Q_{m,s}(t)] < \infty.$	
	\subsection{System Capacity} \label{DCN-sec:system cap}
The capacity region of the cloud computing cluster is defined to be the set of job arrival rates for which there exists a scheduling algorithm 
that renders the system stable. The capacity region of the cloud computing cluster under consideration is well known~\cite{stheja2012loadbal, stheja-unknown}. Following is the capacity region is given by~\cite{stheja2012loadbal}, but expressed in terms of job arrival rates.

\begin{align} %\label{DCN-eq:capacity region-3}	
	\Lambda= \Bigg\{\lambda&:  \text{ for all } l \in [L] \text{ there exist } \beta^l_{W^l}, W^l \in \mathcal{W}^l, \nonumber\\
	&\text{ such that } \beta^l_{W^l} \geq 0, \sum_{W^l \in \mathcal{W}^l}\beta^l_{W^l} =1 \nonumber \\
	&  \text{ and } \sum_s s\lambda_{m,s} = \sum_{l}\sum_{W^l \in {\cal W}^l} \beta^l_{W^l}W^l_m, \forall m\in [M]  \Bigg\}.\nonumber 	
\end{align}
A scheduling algorithm is said to be throughput optimal if it can stabilize network queues for any job arrival rates $ \lambda $ such that $\lambda + \e \in \Lambda $ for some $ \epsilon >0 $.\footnote{With slight abuse of notation, we write $\lambda+1\e=\lambda+\e$.}
	\subsection{Costs} 
	We now formalize the costs that we aim to optimize.
	
\subsubsection{Server Running Cost} \label{DCN-subsubsec:src} 
Recall that each scheduled VM on a server consumes a certain amount of resources, using these resources incurs power consumption. Motivated by \cite{Online-Li-2012}, we adopt an affine cost structure to model the power consumption as discussed in Section~\ref{DCN-sec:IAAS} (see \eqref{DCN-eq:Power-Server}). Let $ \zeta_k $ be the power consumption cost per unit slot incurred for using a resource of type-$k$ on a server. The total cost incurred for executing type-$m$ VM on the server is given per one slot by
\begin{align}
c_m = \sum_kR_{m,k}\zeta_k.
\end{align}
Let $c_0$ be the static power cost incurred by the server.
%Further, motivated by \cite{orgerie:hal-00770903} we introduce cost incurred at the time of server booting and halting as $c_{boot}$ and $c_{halt}$ respectively. 
Using this, we write the total power consumption cost in one slot incurred by the server-$l$ to host a job service configuration $N \in \mathcal{N}^l$ as

\begin{align}\label{DCN-eq:generic_src}
C_1^l(N)= & c_0+\sum_m c_m \Big(\sum_sN_{m,s}\Big). 
%\nonumber\\ 
\end{align}
%&+c_{boot}\textbf{1}_{\{\text{off to on}\}}+c_{halt}\textbf{1}_{\{\text{on to off}\}}.
%Note that the $c_{boot}$ and $c_{halt}$is incurred by the server when it changes its state from OFF to ON and ON to OFF respectively.

%\begin{align}
%C_1^l(N)= & c_0+\sum_m c_m \Big(\sum_sN_{m,s}\Big) \nonumber.
%\end{align}
%We assume that when a server is ON, i.e., when it is serving a job, it incurs unit cost. In particular we assume that the server running cost does not depend on the number of jobs being served. Mathematically, the running cost of server $ l$ at slot $ t $ is $C_1^l(N^l(t))$ where
%\begin{align}\label{DCN-eq:server-running-cost}
%	C_1^l(N) = \textbf{1}_{\{\sum_{m,s}N_{m,s}^l > 0\}}.
%\end{align}
\begin{remark}\label{rem:bin-server}
	For binary server power model as described in Section~\ref{DCN-sec:IAAS}, the power consumption at server $l$ is given by $C_1^l(N) = c_0$.
\end{remark}
We instead consider an upper bound on this cost, $C^l_1(\bar{N}^l)$, in our objective function for ease of analysis. Please see the remark at the end of this section.

\subsubsection{Online Job Migration Cost} \label{DCN-subsubsec:jmc} 
Recall that in the online migration model the migrated job suffers less downtime compared to the offline migration model and incurs job migration cost. In our analysis, we assume the downtime is equal to zero and model the network bandwidth consumption~(in the case of pre-copy migration) and job's performance degradation due to page faults~(in the case of post-copy migration) as job migration cost. We formalize job migration cost as follows.
\par When a scheduling algorithm prescribes a new job service configuration $ \bar{N}(t) $ at slot $ t $, certain jobs may need to be preempted and their states have to be stored. To illustrate this let us assume that a server $l$ serves a VM of type-$m$
under the configuration $N(t-1)$ but does not serve any VM of type-$m$ under $N(t)$. This can happen, for example, if server $l$ is under-utilized and one wants to turn it off at $t$ to save on the server running cost. If the served VM at $t-1$ results in the completion of the corresponding job, the VM's state need not be stored. On the other hand, if the job is not yet complete, it is preempted at $t$ and its state has to be stored. When the preempted VMs resume service, their states are reloaded into the corresponding servers.
All this incurs additional cost, referred to as job migration cost. We assume that each preempted job incurs unit cost.
Notice that only those jobs that are served at $t-1$ and require at least two more slots~(at least one more
slot at $t$) can contribute to the job migration cost.
Mathematically, the total migration cost due to the jobs preempted at server $ l$ at slot $ t $ is $C_2^l(N^l(t-1),N^l(t))$ where,	
	\begin{equation}\label{DCN-eq:Migration cost}
	C_2^l(N^l,\bar{N}^l)= \sum_{m,s\geq 2} (N_{m,s}^l- \bar{N}_{m,s-1}^l)^+.
	\end{equation}
To know whether a job will get completed at $t$ or will require at least one more slot of service, one needs to keep track of its residual service requirement throughout its service. We thus see that, if we want to minimize the job migration cost, we need to keep track of the residual service requirement of the jobs. Recall that the actual number of type-$(m,s)$ jobs served at various servers at slot $t$ satisfy
	\begin{align*}
	N_{m,s}^l(t) &\leq \bar{N}_{m,s}^l(t), \ \forall l,\\
	\sum_l N_{m,s}^l(t) &= \min\left\{Q_{m,s}(t),\sum_l\bar{N}_{m,s}^l(t)\right\}.
	\end{align*}
We impose the following additional condition
\begin{equation} \label{DCN-eq:cond-1}
	N_{m,s}^l(t) \geq \min\left\{N_{m,s+1}^l(t-1), \bar{N}_{m,s}^l(t)\right\}.
\end{equation}
	We can justify this restriction as follows. Server $l$ keeps $\min\{N_{m,s+1}^l(t-1), \bar{N}_{m,s}^l(t)\}  $ of type-$ (m,s) $ jobs from slot $ t-1 $ at slot $ t $. In other words, only $ \left(N^l_{m,s+1}(t-1)-N^l_{m,s}(t)\right)^+$ of these jobs are migrated. This is a reasonable assumption since, given $ N(t-1) $ and $ \bar{N}(t) $, any other strategy would incur strictly larger job migration cost. Observe that Algorithm \ref{Algo:Size} (see Step 2), meets this requirement.
	A job service configuration meeting \eqref{DCN-eq:cond-1} for all types of job is possible because~(see~\eqref{DCN-eq:queue-evolution})
	\[Q_{m,s}(t) \geq \sum_lN_{m,s+1}^l(t-1).\]
	We now argue that, under a sequence of job service configurations, $N(t)$ satisfying \eqref{DCN-eq:cond-1}, the job migration cost $C_2^l(N^l(t-1),N^l(t))$ equals 	
	$C_2^l(N^l(t-1),\bar{N}^l(t))$.	To see this, consider the following two cases.	
			
	\begin{enumerate}

		\item $ N_{m,s-1}^l(t) \geq N_{m,s}^l(t-1) $: In this case,
		\begin{align*}
		&(N_{m,s}^l(t-1)-\bar{N}_{m,s-1}^l(t))^+ \\
		&= (N_{m,s}^l(t-1)-N_{m,s-1}^l(t))^+ =0.
		\end{align*}
		\item $ N_{m,s-1}^l(t) < N_{m,s}^l(t-1) $:	In this case, $\newline N_{m,s-1}^l(t) = \bar{N}_{m,s-1}^l(t)$, and so
		\begin{align*}
		(&N_{m,s}^l(t-1)-\bar{N}_{m,s-1}^l(t))^+ \\
		&= (N_{m,s}^l(t-1)-N_{m,s-1}^l(t))^+.
		\end{align*}		
	\end{enumerate}
	So we see that $C_2^l(N^l(t-1),N^l(t)) =  C_2^l(N^l(t-1),\bar{N}^l(t))$. 
\par We aim to minimize the time average value of the linear combination of $C_1^l(\bar{N}^l(t))$ and $C_2^l(N^l(t-1),\bar{N}^l(t))$.
\begin{equation}\label{DCN-eq:cost-online}
	h(t)=V\sum_lC_1^l(\bar{N}^l(t))+ U\sum_{m,s\geq 2}C_2^l(N^l(t-1),\bar{N}^l(t)),
	\end{equation}
where $ U$ and $V $ are non-negative weighing parameters that can be tuned to yield a tradeoff between
delay, the server running cost and the job migration cost.
Formally, the optimization problem can be posed as
	\begin{align*} 
	%\label{DCN-eq:Opt Prob}
	&\min \lim_{T \to \infty}\frac{1}{T}\sum_{t=0}^{T-1}\E[h(t)]  \\
	& \text{subject to   }\lim_{t\to \infty}\sum_{m,s}\E[Q_{m,s}(t)] < \infty,  \\
	& (\bar{N}_{m,s}^l(t))_{m \in [M], s \in [S]}\in \mathcal{N}^l, \text{  }\forall l \in [L].
	\end{align*}
	\begin{remark}
		1)~If job arrival rates are close to Pareto boundary of the capacity region, in most of the slots there are enough jobs in the system to ensure $ \bar{N}^l(t) = N^l(t) $. In this case $ \sum_lC_1^l(\bar{N}(t)) $ is a tight upper bound on the actual server running cost.\\
		2)~In Section~\ref{DCN-sec:job size}, we provide a job scheduling algorithm~(Algorithm~\ref{Algo:Size}), which can make $\lim_{T \to \infty}\frac{1}{T}\sum_{t=0}^T \sum_l \E[C_1^l(\bar{N}(t))]$ arbitrarily close to $\bar{C}_{\text{opt}}(\lambda)$~(see~\eqref{DCN-eq:Opt-2}). The latter is also the least possible value of time averaged actual server running cost, i.e.,
		$\lim_{T \to \infty}\frac{1}{T}\sum_{t=0}^T \sum_l \E[C_1^l(N(t))]$. Thus Algorithm~\ref{Algo:Size} can also yield actual server running cost arbitrarily close to its optimal value.
	\end{remark}
	\begin{center}
	\small
		\begin{tabular}{||l  l||}
			\hline \rule{0pt}{15pt}
			Symbol & Description  \\ [0.5ex]
			\hline\hline\rule{0pt}{13pt}
			$ L $ & Number of servers  \\
		%	\hline \rule{0pt}{13pt}
			$ M $ & Number of types of jobs \\
		%	\hline \rule{0pt}{13pt}
			$ S $ & Maximum size of the job \\
		%	\hline \rule{0pt}{13pt}
			$ \Lambda $&Capacity region of the cloud system  \\
		%	\hline \rule{0pt}{13pt}
			$ \lambda $ &Job arrival rate vector\\
%			\hline\rule{0pt}{14pt}
%			$ Q(t) $  &Queue lengths at time $ t $, $ M \times S $ matrix  \\
		%	\hline	\rule{0pt}{15pt}
			$ {\cal N}^l $ & Set of job service configurations at server $l$;\\
			& each job service configuration is a $ M \times S $
			matrix\\
			%\hline  \rule{0pt}{15pt}
			  $\bar{N}^l(t)$ &Job Service configuration at server $l$ \\
			&  at slot $t$ under Algorithm~\ref{Algo:Size}\\ 
			%\hline 			\rule{0pt}{15pt}
			$\tilde{N}^l(t)$ &Job service configuration at server $l$  \\
			& at slot $t$ under Algorithm~\ref{Algo:UK-algo-1}\\
			%\hline 			\rule{0pt}{15pt}
			$N^l(t)$ &Actual job service configuration at server $l$\\ & at slot $t$ \\			
			%\hline \rule{0pt}{15pt}
			$Z^l(t)$ &Completed jobs at server $l$ 
			at the end of slot $t$ \\			
			\hline
		\end{tabular}

	\end{center}

	\section{ONLINE MIGRATION: SCHEDULING WITH KNOWN JOB SIZES} \label{DCN-sec:job size}
	We start with characterizing the minimum average cost required
	to keep the system stable. Towards this we consider
	the class of all scheduling policies, including those with
	complete knowledge of job arrival statistics, past job arrivals, past job service configurations and also future arrivals. For
	any $l \in [L]$, we let ${\cal P}^l$ be the set
	of all probability distributions over ${\cal N}^l$
	and ${\cal S}^l$ the set of all stochastic matrices in $\mathbb{R}^{{\cal N}^l \times {\cal N}^l}$.
	Let ${\cal S} \coloneqq \prod_{l \in [L]}{\cal S}^l$ and ${\cal P} \coloneqq \prod_{l \in [L]}{\cal P}^l$.
	Further, $\pi = (\pi^l)_{l \in [L]}$
	and $P = (P^l)_{l \in [L]}$
	denote vectors of probability distributions and
	stochastic matrices, respectively, for all $l \in [L]$; $\pi \in {\cal P}, P \in {\cal S} $.
	Any stabilizing scheduling policy incurs an average cost
	greater than $C_{\text{opt}}(\lambda)$, the optimal value
	of the following problem:
	\begin{align*}
	\min_{(\lambda^l)_{l \in [L]},P \in {\cal S},\pi
		\in {\cal P}} & \sum_l \left(V C^l_{\pi^l,1}(\lambda^l) + U C^l_{P^l,\pi^l,2}(\lambda^l)\right), \\
	&\text{such that} \sum_l \lambda^l = \lambda, \nonumber \\
	\pi^l P^l &= \pi^l, l \in [L], \nonumber\\
	\sum_s s\lambda_{m,s}^l &\leq \sum_{N^l \in {\cal N}^l}\pi^l_{N^l}\sum_s(N^l_{m,s}), \forall l,m,\nonumber
	\end{align*}
	where
	\begin{align} 
	C^l_{P^l,\pi^l,2}&(\lambda^l) \coloneqq  \sum_{N^l \in {\cal N}^l}\pi^l_{N^l} \nonumber \\ 
	&\Bigg(\sum_{N'^l \in {\cal N}^l}P^l_{N^l,N'^l}\sum_{m,s\geq 2} \left[N^l_{m,s}-N'^l_{m,s-1}\right]^+\Bigg),\label{DCN-eq:cost2} \\
	C^l_{\pi^l,1}&(\lambda^l) \coloneqq  \sum_{N^l \in {\cal N}^l}\pi^l_{N^l}\Bigg(c_0\mathbbm{1}_{N^l\ne 0}+\sum_mc_m\bigg(\sum_sN_{m,s}\bigg)\Bigg)\label{DCN-eq:cost1}
	\end{align}
	\begin{lemma}\label{Lemma:Markov-The-Best}
	For any arrival rate $\lambda \in \Lambda$ and any stabilizing scheduling policy, the average cost is lower bounded by $C_{\text{opt}}(\lambda)$.
	\end{lemma}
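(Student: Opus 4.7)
The plan is to associate any stabilizing scheduling policy with a feasible point $(\pi,P,(\lambda^l))$ of the program defining $C_{\text{opt}}(\lambda)$, and then to show that the policy's long-run average cost is bounded below by the objective value at that point, which in turn is at least $C_{\text{opt}}(\lambda)$ by optimality.

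Fix any stabilizing policy. For each $l \in [L]$ and $T \geq 1$, set
\begin{align*}
\pi^l_N(T) &= \tfrac{1}{T}\sum_{t=1}^{T}\P(\bar{N}^l(t)=N),\\
\mu^l_{N,N'}(T) &= \tfrac{1}{T}\sum_{t=1}^{T}\P(\bar{N}^l(t-1)=N,\bar{N}^l(t)=N'),
\end{align*}
and let $\lambda^l_{m,s}(T)$ be a per-server allocation rate, for instance the average per-slot number of type-$(m,s)$ jobs first assigned to server $l$ in $[0,T)$. Since $\mathcal{N}^l$ is finite, all three families lie in a compact set, so I would pass to a common subsequence $T_k \to \infty$ along which $\pi^l(T_k) \to \pi^l$, $\mu^l(T_k) \to \mu^l$, and $\lambda^l(T_k) \to \lambda^l$, and then define $P^l_{N,N'}=\mu^l_{N,N'}/\pi^l_N$ on rows with $\pi^l_N>0$, extending $P^l$ arbitrarily to a stochastic matrix on the remaining rows.

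Next I would verify feasibility of $(\pi,P,(\lambda^l))$ for the $C_{\text{opt}}$ program. Stationarity $\pi^l P^l = \pi^l$ follows from the telescoping identity $\pi^l_{N'}(T)-\sum_N\mu^l_{N,N'}(T)=O(1/T)$. The allocation identity $\sum_l\lambda^l_{m,s}=\lambda_{m,s}$ follows because under a stabilizing policy every arriving job is eventually served, so the long-run rate of type-$(m,s)$ jobs handled somewhere equals their arrival rate. The capacity-type inequality $\sum_s s\lambda^l_{m,s}\leq \sum_{N^l}\pi^l_{N^l}\sum_s N^l_{m,s}$ is obtained by telescoping the queue recursion~\eqref{DCN-eq:queue-evolution} (aggregated into per-server slot-demand for type $m$) over $t=0,\ldots,T-1$, dividing by $T$, and using the stability consequence $\limsup_T\E[Q_{m,s}(T)]/T=0$ to kill boundary terms.

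Finally, using the equivalence $C_2^l(N^l(t-1),N^l(t))=C_2^l(N^l(t-1),\bar{N}^l(t))$ established just before~\eqref{DCN-eq:cost-online}, and the linearity of $C_1^l,C_2^l$ in one-step and consecutive-pair configuration variables, subsequential convergence of the empirical measures yields
\begin{equation*}
\liminf_{T\to\infty}\tfrac{1}{T}\sum_{t=0}^{T-1}\E[h(t)] \;\geq\; \sum_l\bigl(V C^l_{\pi^l,1}(\lambda^l)+U C^l_{P^l,\pi^l,2}(\lambda^l)\bigr) \;\geq\; C_{\text{opt}}(\lambda),
\end{equation*}
where the first inequality absorbs the slack between $c_0$ in~\eqref{DCN-eq:generic_src} and $c_0\mathbbm{1}_{N^l\neq 0}$ in~\eqref{DCN-eq:cost1}, a gap that is in the direction we need. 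The main obstacle will be the capacity-inequality step: one must carefully telescope~\eqref{DCN-eq:queue-evolution} across the types $(m,s)\to(m,s-1)$, reckon with the $(\cdot)^+$ and $\min\{\cdot,\cdot\}$ nonlinearities, and invoke stability to rule out any long-run service deficit. A secondary technicality is that $P^l$ is arbitrary on $\pi^l$-null rows, but this is harmless since such rows contribute zero to $C^l_{P^l,\pi^l,2}$.
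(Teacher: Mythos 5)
Your proposal follows essentially the same route as the paper's proof in Appendix~A: both construct empirical occupation and transition measures over the finite configuration sets, extract a convergent subsequence by compactness, use stability to produce a feasible rate split $(\lambda^l)$ satisfying the capacity constraint, and invoke optimality of the program defining $C_{\text{opt}}(\lambda)$. The only differences are presentational (you work with expected frequencies and a joint pair measure $\mu^l$ rather than the paper's pathwise indicator fractions, and you spell out the telescoping steps the paper asserts without detail), so the argument is correct and matches the paper's.
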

	\begin{proof}
	See Appendix~\ref{Proof:Markovian the Best}.
	\end{proof}

	We show that our proposed algorithm~(Algorithm~\ref{Algo:Size}) stabilizes the system and achieves a cost arbitrarily close to $ C_{\text{opt}}(\lambda) $. 
	In fact, Algorithm~1 achieves a cost arbitrarily close to $ \bar{C}_{\text{opt}}(\lambda) $, the optimal value of the following
	problem:
	\begin{align}
	\min_{(\lambda^l)_{l \in [L]}, \pi \in {\cal P}} & \sum_l C^l_{\pi^l,1}(\lambda^l) \label{DCN-eq:Opt-2} \\
	\text{such that} & \sum_l \lambda^l = \lambda, \nonumber \\
	&  \sum_s s\lambda_{m,s}^l \leq \sum_{N^l \in {\cal N}^l}\pi^l_{N^l}\sum_s N^l_{m,s}. \nonumber
	\end{align}
	
	Since $\bar{C}_{\text{opt}}(\lambda)$ lower bounds $C_{\text{opt}}(\lambda)$, our claim holds. Furthermore, consider a $\lambda$ such that $\lambda + \epsilon \in \Lambda$ for some $\epsilon > 0$. Let an optimal solution of the above problem, for arrival rates $\lambda +\epsilon$, be attained at $\bar{\pi} \in \cal{P}$. Then, for arrival rates $\lambda$, 
(a) The static scheduling policy $\bar{\pi}$ stabilizes the system,  
(b) it incurs a server running cost upper bounded by $\bar{C}_{\text{opt}}(\lambda+\e)$.
Finally, $\bar{C}(\cdot)$~(see \cite[Lemma~1]{subhashini2017augment}) is a continuous function of arrival rates. Hence, by choosing $\epsilon > 0$ arbitrarily small, we can obtain static policies that stabilizes the system and incur server running cost arbitrarily close to $\bar{C}_{\text{opt}}(\lambda)$.
\subsection{Drift plus Penalty based Algorithm}
We propose a job scheduling algorithm, Algorithm~\ref{Algo:Size}, parameterized by non-negative weighing parameters $U $ and $V$. 
\begin{algorithm}
		\caption{Online migration-Known job sizes}
		\label{Algo:Size}
		\begin{algorithmic}[1]
			\State Given $ N(t-1)=n , Q(t)=q $, for all $l \in [L]$,
			 \begin{align} \label{DCN-eq:drift+penalty-control-action}
			\bar{\mathcal{N}}^{l}(t) = \argmax_{N \in \mathcal{N}^{l}}& \Bigg[\sum_m \Big(\sum_ssq_{m,s}\Big) \Big(\sum_s N_{m,s}\Big)\nonumber\\
				- V&C_1^l(N)  -UC_2^l(n,N)\Bigg],\\
			\bar{N}^l(t) \in \argmax_{N \in \bar{\mathcal{N}}^{l}(t)}&\sum_{m,s}q_{m,s}N_{m,s},\nonumber
			\end{align}
			\textbf{Obtaining $ N(t) $ from $ \bar{N}(t)$:}
			\State For all $l \in [L], m\in[M],s \in [S]$,
			\[\hat{N}^l_{m,s}(t) = \min\{\bar{N}^l_{m,s}(t),n^l_{m,s+1}\},\]
			\State For all $m\in[M],s \in [S]$,
			\[q_{m,s} = q_{m,s}- \sum_l\hat{N}^l_{m,s}(t),\]
			\State For all $m\in[M],s \in [S], l \in [L]$, 
%			\For{$l = 1$ to $L$}
			\begin{align*}
			N^l_{m,s}(t) = &\min\{q_{m,s},\bar{N}^l_{m,}(t)-\hat{N}^l_{m,s}(t)\}+\hat{N}^l_{m,s}(t) \nonumber \\
			q_{m,s} = &q_{m,s}-\min\{q_{m,s},\bar{N}^l_{m,}(t)-\hat{N}^l_{m,s}(t)\}
			\end{align*}
%			\EndFor
		\end{algorithmic}
\end{algorithm}
In Step~$1$, when $ \bar{\cal{N}}^l(t) $ is not a singleton, different job service configurations in $ \bar{\cal{N}}^l(t) $ may correspond to different number of jobs being served. Selecting $ \bar{N}(t) $ as in this step ensures that $ \min\{\sum_s \bar{N}_{m,s}(t), \sum_s q_{m,s}\} $ type-$(m,s)$ jobs are served. We use this property in the proof of Theorem \ref{Theorem:Size-Algo-Optimality}. Steps 2-4 ensure that the job service allocations at all the servers meet condition \eqref{DCN-eq:cond-1}. We have provided the pseudo code for Algoritthm~\ref{Algo:Size} in Appendix~\ref{Appenidix:Pseudocode}.
The following theorem establishes the optimality of the above algorithm.
\begin{theorem}\label{Theorem:Size-Algo-Optimality}
For any arrival vector $ \lambda $ such that $ \lambda + \e \in \Lambda$ for some $\epsilon >0$, Algorithm \ref{Algo:Size} achieves average queue length and average cost bounds as given below.
\begin{enumerate}[(a)]
\item	 
$\lim_{T\to \infty}\frac{1}{T}\sum_{t=0}^{T-1}\sum_{m,s}\E[sQ_{m,s}(t)]  \newline \newline$
$  \leq \frac{1}{\epsilon}\Big(B_{2}+ ULN_{\max}+V\bar{C}_{\text{opt}}(\lambda+\e)\Big),\newline$ 
	
	where $B_2 = \frac{MS}{2}\Big(SA^2_{\max}+2(LN_{\max})^2\Big)\newline$
\item
$\lim_{T\to \infty}\frac{1}{T}\sum_{t=0}^{T-1}\sum_l\E[C^l_1(\bar{N}^l(t))]  \newline\newline
	\textrm{    } \leq \dfrac{ B_2+ULN_{\max}}{V}+\bar{C}_{\text{opt}}(\lambda+\e), \newline$
\item
$\lim_{T\to \infty}\frac{1}{T}\sum_{t=0}^{T-1}\sum_l\E\Big[C^l_2(N^l(t-1),\bar{N}^l(t))\Big]	\leq \dfrac{B_4}{U},\newline\newline$
where $B_4 = LMSA_{\max}N_{\max}+LMN^2_{\max}.\newline$
\end{enumerate}
\end{theorem}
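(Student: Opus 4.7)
The plan is to apply the standard Lyapunov drift-plus-penalty framework. I would start with the Lyapunov function
\[
L(Q(t)) = \tfrac{1}{2}\sum_{m\in[M]} W_m(t)^2, \qquad W_m(t) := \sum_{s\in[S]} s\,Q_{m,s}(t),
\]
which weights each outstanding type-$m$ job by its residual size and matches the $\sum_s sq_{m,s}$ factor appearing in the algorithm's objective~\eqref{DCN-eq:drift+penalty-control-action}. Using the queue evolution~\eqref{DCN-eq:queue-evolution} and the key re-indexing observation (a type-$(m,s+1)$ job served at $t$ reappears as type-$(m,s)$ at $t+1$, shedding one unit of residual work), one obtains the clean identity $W_m(t+1)-W_m(t) = \sum_s s\,A_{m,s}(t) - \sum_l N^l_m(t)$. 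Squaring and taking conditional expectation yields
\[
\Delta(Q(t),N(t-1)) \leq B_1 + \sum_m W_m(t)\Big[\sum_s s\lambda_{m,s} - \sum_l\E[N^l_m(t)\mid Q(t),N(t-1)]\Big],
\]
with $B_1$ collecting bounded second-order pieces. Replacing $N^l_m$ by the scheduled $\bar N^l_m$ introduces a corrector that is nonzero only on coordinates where $\sum_l\bar N^l_{m,s}>Q_{m,s}$; since $\sum_l\bar N^l_{m,s}\leq LN_{\max}$, this corrector is uniformly bounded and can be folded into an enlarged second-order constant, producing the $B_2$ of the theorem. After adding $V\E[h_1(t)]+U\E[h_2(t)]$, the drift-plus-penalty upper bound splits additively across servers into exactly the per-server maximand of Algorithm~\ref{Algo:Size}, so the algorithm minimizes the entire upper bound over all feasible scheduling rules.

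For (a) and (b), I would compare against the static Markov-random policy $\bar\pi$ attaining $\bar C_{\text{opt}}(\lambda+\e)$, whose existence follows from the discussion immediately preceding the theorem. By the capacity-region characterization in Section~\ref{DCN-sec:system cap}, $\sum_l\E_{\bar\pi}[N^l_m]\geq \sum_s s(\lambda_{m,s}+\e)$; trivially $\E_{\bar\pi}[h_1]=\bar C_{\text{opt}}(\lambda+\e)$ and $\E_{\bar\pi}[h_2]\leq LN_{\max}$, since each server preempts at most $N_{\max}$ jobs per slot. Substituting $\bar\pi$ into the minimized upper bound gives
\[
\Delta(t)+V\E[h_1(t)]+U\E[h_2(t)] \leq B_2 + V\bar C_{\text{opt}}(\lambda+\e) + ULN_{\max} - \e\sum_m W_m(t).
\]
Telescoping over $t=0,\ldots,T-1$, dividing by $T$, using $L(Q(T))\geq 0$ and $L(Q(0))$ finite, and sending $T\to\infty$ yields (a) directly (via $\sum_m W_m = \sum_{m,s}sQ_{m,s}$), and yields (b) after discarding the negative queue term and dividing by $V$.

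Part (c) is the step I expect to be the main obstacle, because naively dividing the displayed inequality by $U$ only produces $\E[h_2]\leq LN_{\max}+O(1/U)$, weaker than the advertised $B_4/U$. To reach the sharper bound I would exploit the per-server optimality of $\bar N^l(t)$ in \eqref{DCN-eq:drift+penalty-control-action} against a feasible zero-migration alternative $N^*$ obtained by shifting $N^l(t-1)$ in the $s$-index: $N^*_{m,s-1}:=N^l_{m,s}(t-1)$ for $s\geq 2$ and $N^*_{m,S}:=0$. Since $N^*$ has the same VM counts per type as $N^l(t-1)$, it lies in $\mathcal{N}^l$, satisfies $C_2^l(N^l(t-1),N^*)=0$, and has $C_1^l(N^*)=C_1^l(N^l(t-1))$. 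The server-wise optimality inequality ``value at $\bar N^l(t)$ $\geq$ value at $N^*$'' rearranges to
\[
U\,C_2^l(N^l(t-1),\bar N^l(t)) \leq \sum_m W_m(t)\bigl(\bar N^l_m(t)-N^*_m\bigr) + V\bigl(C_1^l(N^*)-C_1^l(\bar N^l(t))\bigr).
\]
Summing over $l$, taking expectations, and time-averaging, the challenge is to control the $W$-weighted residual: since $|\bar N^l_m-N^*_m|\leq 2N_{\max}$, I would combine this inequality with the Lyapunov drift (or use an independent telescoping over $t$ of the $W$-weighted term) so that the $\sum_m W_m$ contribution cancels and the surviving pieces are bounded by the $V$- and $U$-independent constant $B_4=LMSA_{\max}N_{\max}+LMN_{\max}^2$. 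Making this cancellation clean, so that the final bound on $U\sum_l\E[C_2^l]$ is genuinely a constant rather than depending on the large-queue regime, is where the delicate bookkeeping lies.
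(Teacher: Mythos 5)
Your overall route is the same as the paper's: the Lyapunov function $\tfrac12\sum_m(\sum_s sQ_{m,s}(t))^2$ on the per-type workload backlog, the clean workload recursion, absorbing the $N$-versus-$\bar N$ discrepancy into the second-order constant via the tie-breaking rule of Step~1, comparison with the static policy $\bar\pi$ attaining $\bar C_{\text{opt}}(\lambda+\e)$ with its migration cost crudely bounded by $N_{\max}$ per server, and telescoping. Parts (a) and (b) of your argument go through exactly as in the paper (the paper routes the comparison through an intermediate configuration $\eta^l(t)=\argmax_N\sum_mJ_m(t)N_m-VC_1^l(N)$ before invoking $\bar\pi$, but this is only an organizational difference and yields the same $ULN_{\max}$ constant).

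For part (c) you have correctly identified both the key idea (server-wise optimality of $\bar N^l(t)$ against a feasible zero-migration shift of the previous configuration, followed by telescoping of the workload-weighted term) and the place where care is needed. However, your comparator as written does not work: setting $N^*_{m,S}:=0$ gives $\sum_sN^*_{m,s}=N^l_m(t-1)-N^l_{m,1}(t-1)$, so your claim that $N^*$ has the same per-type VM counts as the previous configuration is false whenever some type-$(m,1)$ job was served. The resulting optimality inequality then leaves an un-telescoped term $\sum_mW_m(t)\,N^l_{m,1}(t-1)$ which grows linearly with the queue lengths and destroys the $O(1)\cdot T$ bound you need. The paper avoids this by wrapping the $s=1$ entry around, i.e.\ $\bar\eta^l_{m,S}(t):=\bar N^l_{m,1}(t-1)$ (and builds the shift from $\bar N^l(t-1)$ rather than $N^l(t-1)$), so that $\bar\eta^l_m(t)=\bar N^l_m(t-1)$ exactly; the sum $\sum_t\sum_mJ_m(t)\bigl(\bar N^l_m(t)-\bar N^l_m(t-1)\bigr)$ then re-indexes to $\sum_m\sum_t\bar N^l_m(t)\bigl(J_m(t)-J_m(t+1)\bigr)$ plus boundary terms, and the one-slot workload decrement bound together with $\bar N^l_m(t)\le N_{\max}$ delivers the constant $B_4$. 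With that one repair (and actually carrying out the telescoping you defer as ``delicate bookkeeping''), your argument coincides with the paper's.
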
 
	\begin{remark}
		By using large enough $U$ and $V $, the algorithm can achieve a cost close to the optimum cost $C_{\text{opt}}(\lambda)$.
		However, large $U$ and $V$ tend to give large delays.
	\end{remark}
	\begin{proof}
		See Appendix~\ref{Proof:Size-Algo-Optimality}.
	\end{proof}

\section{ONLINE MIGRATION: SCHEDULING WITH UNKNOWN JOB SIZES} \label{DCN-sec:uk job size}	
	In several scenarios, job sizes may not be available on job arrivals. To address this, we propose an optimal cost scheduling algorithm that distinguishes jobs based on readily offered service to them. More specifically, now $(\tilde{Q}_{m,a}(t))_{m \in [M], 0 \leq a\leq S-1}$ denote the number of jobs in the system, that require VMs of type-$m$ and have been offered $a$ slots of service until time $t$. We now refer to these jobs as type-$(m,a)$ jobs. In particular, all external arrivals of type-$m$ at slot $t$ join $\tilde{Q}_{m,0}(t)$. A job service configuration is now expressed as $N=(N_{m,a})_{m \in [M], 0 \leq a\leq S-1}$, where $N_{m,a}$ represents the number of VMs offered to type-$(m,a)$ jobs. 
	\subsection{Queue Length Evolution}
	Let $\tilde{N}(t), t \geq 0$ be the job service configuration prescribed by certain scheduling algorithm. As explained earlier, the actual job service configuration at slot $t$ can be different from $\tilde{N}(t)$. Let us use $N(t)$ to denote the former. 
	Let $Z(t) = (Z_{m,a}(t))_{m \in [M], 0 \leq a \leq S-2}$, where $Z_{m,a}(t)$ is the number of type-$(m,a)$ jobs that complete their service at $(t+1)-$. We can then write queue length evolution as follows.
	\begin{align*}	
	\tilde{Q}_{m,0}(t+1)=&\left(\tilde{Q}_{m,0}(t)-\sum_l\tilde{N}^l_{m,0}(t)\right)^++\sum_s A_{m,s}(t), \\
	\tilde{Q}_{m,a}(t+1)=&\left(\tilde{Q}_{m,a}(t)-\sum_l\tilde{N}^l_{m,a}(t)\right)^+ \\
	+\min &\left\{\sum_l\tilde{N}^l_{m,a-1}(t), \tilde{Q}_{m,a-1}(t)\right\}-Z_{m,a-1}(t).
\end{align*}	
As in Section~\ref{DCN-subsubsec:src} the server running cost is $C^l_1(N^l(t))$. For ease of analysis we consider only static power consumption cost, with $c_0=1, c_1=\dots=c_m = 0$ as the server running cost. But, the analysis holds for the affine cost structure. The job migration cost can be defined as follows, 
\begin{align*}
\tilde{C}^l_2&\Big(N^l(t-1),Z^l(t-1),\tilde{N}^l(t)\Big)\\
& = \sum_m\sum_{a=0}^{S-2}\left(N_{m,a}^l(t-1)-Z_{m,a}^l(t-1)-\tilde{N}_{m,a+1}^l(t)\right)^+.
\end{align*}
Using similar arguments as in Section \ref{DCN-subsubsec:jmc} it can be shown that
\begin{align*}
\tilde{C}^l_2&\Big(N_{m,a}^l(t-1),Z_{m,a}^l(t-1),\tilde{N}_{m,a+1}^l(t)\Big)\\
=&\tilde{C}^l_2\Big(\tilde{N}_{m,a}^l(t-1),Z_{m,a}^l(t-1),\tilde{N}_{m,a+1}^l(t)\Big).
\end{align*}
	We consider the following upper bound on the total incurred cost at slot $ t $
	\begin{align}\label{DCN-eq:UK-server-running-cost}
	h&(t)=\sum_{l=1}^L \Bigg[VC^l_1(\tilde{N}^l(t))\nonumber \\
	&+U\tilde{C}^l_2\Big(N_{m,a}^l(t-1),Z_{m,a}^l(t-1),\tilde{N}_{m,a+1}^l(t)\Big) \Bigg].
	\end{align}

	\subsection{Scheduling Algorithm for Unknown Job Sizes}
	\label{DCN-sec:sched-algo2}
	We now propose a job scheduling algorithm, Algorithm~\ref{Algo:UK-algo-1}, for unknown job sizes, with the objective of 
	minimizing the time average of the cost $ h(t) $ subject to the stability of the queues
	$(\tilde{Q}_{m,a}(t))_{m\in [M], 0 \leq a \leq S-1}$. We use a logarithmic weight function, $g(q) = \log(1+q)$ to determine the weight of a job service configuration~(Step~1). Notice that we use lengths of job-queues in Algorithm~\ref{Algo:UK-algo-1} whereas its stability analysis uses backlogged workload. Using logarithmic weight function ensures that this discrepancy introduces a bounded error that has little effect at higher queue lengths.
	
\begin{algorithm}
\caption{Online migration-Unknown job sizes }
\label{Algo:UK-algo-1}
\begin{algorithmic}[1]
	\State Given $N(t-1)=n, Z(t-1) = z, \tilde{Q}(t)=q$, for all $l \in [L]$,  
	\begin{align*} 
	\tilde{\mathcal{N}}^l(t)= &\argmax_{N \in\mathcal{\bar{N}}^l}  \Bigg[\sum_m g\Big(\sum_aq_{m,a}\Big)\Big(\sum_a N_{m,a}^l\Big) \nonumber \\
	&- V\textbf{1}_{(\sum_{m,a}N_{m,a}^l>0)}  \nonumber \\ 
	&-U \sum_{m,a \geq 1} \left(n_{m,a-1}^l-z_{m,a-1}^l-N_{m,a}^l\right)^+\Bigg],\\
	\tilde{N}^l(t) \in & \argmax_{N \in \tilde{\mathcal{N}}^l(t)}\sum_{m,a}q_{m,a}N_{m,a}
	\end{align*}	
	\textbf{Obtaining $ N(t) $ from $ \tilde{N}(t)$:}
	\State For all $l \in [L], m\in[M],a \in [S-1]$,
	\[\hat{N}^l_{m,a}(t) = \min\{\tilde{N}^l_{m,a}(t),n^l_{m,a-1}-z^l_{m,a-1}\},\]
	\State For all $m\in[M],a \in \{0, \dots S-1\}$,
	\[q_{m,a} = q_{m,a}- \sum_l\hat{N}^l_{m,a}(t),\]
	\State For all $m\in[M],a \in \{0, \dots S-1\}, l \in [L]$,	
		\begin{align*}
			&N^l_{m,a}(t) \\
			&= \min\{q_{m,a}-z_{m,a-1},\tilde{N}^l_{m,a}(t)-\hat{N}^l_{m,a}(t)\}+\hat{N}^l_{m,a}(t) \nonumber \\
			&q_{m,a} = q_{m,a}-\min\{q_{m,a}-z_{m,a-1},\tilde{N}^l_{m,a}(t)-\hat{N}^l_{m,a}(t)\}
			\end{align*}

	\end{algorithmic}
\end{algorithm}
The following theorem establishes its optimality.	
\begin{theorem}\label{Theorem:UK-algo-1-optimality}
For any arrival vector $ \lambda $ such that $ \lambda + \e\in \Lambda$ for some $\epsilon >0$, Algorithm~\ref{Algo:UK-algo-1} achieves average queue length and average cost bounds as given below.
\begin{enumerate}[(a)]
	\item
	$\lim_{T\to \infty}\frac{1}{T}\sum_{t=1}^T\E[\sum_{m,a}\tilde{Q}_{m,a}(t)] < \infty,  \newline$

	\item 
	$\lim_{T \to \infty} \sum_{t=1}^{T}\sum_l \E\big[C^l_1(\tilde{N}^l(t))\big] \newline \newline \textrm{      }\leq \dfrac{K_2+LUN_{\max}}{V}+\bar{C}_{\text{opt}}(\lambda+\e_g),\newline$
where $K_2=LM(A_{\max}\bar{S}_{\max}+N_{\max}\bar{S}_{\max})+LN_{\max}g(SLN_{\max})+M\log S.\newline$
	\item 
$	\frac{1}{T}\sum_{t=1}^{T} \E\Big[\tilde{C}^l_2\Big(N_{m,a}^l(t-1),Z_{m,a}^l(t-1),\tilde{N}_{m,a+1}^l(t)\Big)\Big]\newline
	\leq  \frac{1}{U}\Bigg(MN_{\max}A_{\max}+MN_{\max}\log\Big(1+N_{\max}\Big)\Bigg) \newline$

\end{enumerate}

\end{theorem}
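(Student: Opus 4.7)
The plan is to apply a drift-plus-penalty argument with a Lyapunov function tailored to the logarithmic weight $g(q)=\log(1+q)$. I would set $F(\tilde{Q}) = \sum_{m,a}\Phi(\tilde{Q}_{m,a})$ with $\Phi(q) = \int_0^q g(u)\,du = (1+q)\log(1+q)-q$, so that $\Phi$ is convex with $\Phi'=g$ and satisfies the discrete bounds $\Phi(q+\Delta)-\Phi(q)\le g(q+\Delta)\,\Delta$ for $\Delta\ge 0$ and $\Phi(q+\Delta)-\Phi(q)\le g(q)\,\Delta$ for $\Delta\le 0$. Using the queue dynamics together with $\sum_s A_{m,s}(t)\le A_{\max}$, $\sum_l \tilde{N}^l_{m,a}(t)\le LN_{\max}$, and $Z_{m,a}(t)\le \sum_l N^l_{m,a}(t)$, the one-step conditional drift admits an upper bound of the form $-\sum_m g\bigl(\sum_a \tilde{q}_{m,a}\bigr)\bigl(\sum_{l,a}\tilde{N}^l_{m,a}(t)-\sum_s A_{m,s}(t)\bigr) + K_1$, where $K_1$ collects finite additive corrections coming from the concavity of $g$ and boundedness of arrivals and services.

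Adding $V\,\E[\sum_l C^l_1(\tilde{N}^l(t))]+U\,\E[\sum_l \tilde{C}^l_2(\cdot)]$ to this upper bound, I would observe that Algorithm~\ref{Algo:UK-algo-1} maximizes, at each slot and each server, exactly the non-constant negative of the resulting drift-plus-penalty expression; the identity $\tilde{C}^l_2(N^l(t{-}1),Z^l(t{-}1),\tilde{N}^l(t)) = \tilde{C}^l_2(\tilde{N}^l(t{-}1),Z^l(t{-}1),\tilde{N}^l(t))$ derived in Section~\ref{DCN-subsubsec:jmc} lets one replace the algorithmic proxy by the actual migration cost at the expense of at most $ULN_{\max}$ per slot. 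Consequently, the conditional drift-plus-penalty under Algorithm~\ref{Algo:UK-algo-1} is no larger than under any feasible comparison policy. I would then substitute a static randomized policy $\bar{\pi}$ constructed from an optimal solution of~\eqref{DCN-eq:Opt-2} at arrival rates $\lambda+\e_g$; by the capacity-region characterization and~\cite[Lemma~1]{subhashini2017augment}, $\bar{\pi}$ provides per-type workload slack of at least $\e_g$ while incurring server running cost at most $\bar{C}_{\text{opt}}(\lambda+\e_g)$. This gives
\begin{align*}
&\E\bigl[F(\tilde{Q}(t+1))-F(\tilde{Q}(t))\mid \tilde{Q}(t)\bigr] + V\,\E[\mathrm{srv}(t)] + U\,\E[\mathrm{mig}(t)] \\
&\qquad \le K_2 - \e_g \sum_m g\Bigl(\sum_a \tilde{q}_{m,a}\Bigr) + V\bar{C}_{\text{opt}}(\lambda+\e_g) + ULN_{\max}.
\end{align*}
Telescoping over $t=0,\ldots,T-1$ and dividing by $T$ yields~(b) directly; Part~(a) follows by dropping the non-negative penalty terms, extracting a bound on the Cesàro average of $g(\sum_a\tilde{Q}_{m,a}(t))$, and lifting it to $\tilde{Q}$ via monotonicity of $g$. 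Part~(c) is obtained by a parallel accounting in which drift and server-cost terms are moved to the constant side and the $U$-weighted migration cost is isolated; the $MN_{\max}\log(1+N_{\max})$ term arises from bounding $g(q)\,N \le N\log(1+N)$ when $N$ exceeds $q$.

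The main obstacle is bridging the job-queue weights $g(\sum_a \tilde{q}_{m,a})$ used by the algorithm with the backlogged-workload quantities needed to invoke feasibility of $\bar{\pi}$ and to conclude stability. Because each type-$(m,a)$ job contributes at most $S-a\le S$ residual slots of work, the backlogged workload is at most $S$ times the job-queue length, and applying $\log(1+\cdot)$ converts this multiplicative factor into an additive $M\log S$ constant absorbed into $K_2$. The stochastic completion process $Z(t{-}1)$ similarly obstructs a direct comparison between $N^l_{m,a}$ and $\tilde{N}^l_{m,a}$; this discrepancy I would absorb into the $LN_{\max}\,g(SLN_{\max})$ contribution to $K_2$. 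With these two bridges in place, the displayed drift-plus-penalty inequality holds and yields the three claimed bounds.
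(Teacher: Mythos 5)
Your overall architecture --- a drift-plus-penalty argument with a log-integral Lyapunov function, comparison against a static policy optimal for \eqref{DCN-eq:Opt-2} at rates $\lambda+\e_g$, an additive $M\log S$ bridge between job counts and workload, and a no-migration comparison configuration for part (c) --- is the same as the paper's. But there is a genuine gap at the central step: the drift inequality you display is not justified by the Lyapunov function you chose. With $F(\tilde{Q})=\sum_{m,a}\Phi(\tilde{Q}_{m,a})$ the drift picks up the internal transitions from queue $(m,a)$ to $(m,a+1)$: a served job that does not complete leaves $\tilde{Q}_{m,a}$ and enters $\tilde{Q}_{m,a+1}$, contributing $g(\tilde{Q}_{m,a+1})-g(\tilde{Q}_{m,a})$, which has no definite sign. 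If you instead aggregate to $\sum_m\Phi\bigl(\sum_a\tilde{Q}_{m,a}\bigr)$, the departure from the aggregate type-$m$ queue is the number of \emph{completions} $\sum_a Z_{m,a}(t)$, not the number of allocated VMs $\sum_{l,a}\tilde{N}^l_{m,a}(t)$; a served type-$(m,a)$ job completes only with probability $\tilde{p}_{m,a}=\P(S_m=a+1\mid S_m>a)$, which can be arbitrarily small, so completions cannot be replaced by allocations in the negative drift term. The missing device --- the linchpin of the paper's proof, imported from \cite{stheja-unknown} --- is to build the Lyapunov function on the \emph{expected residual workload} $\tilde{J}_m(t)=\sum_a\tilde{Q}_{m,a}(t)W_m(a)$ with $W_m(a)=\E[S_m-a\mid S_m>a]$, for which each slot of service removes exactly one unit in expectation ($\E[\mathcal{D}_m]=1$), so the expected decrease is exactly $\sum_{l,a}\tilde{N}^l_{m,a}(t)$ as your inequality requires (after also introducing a fictitious departure process to absorb unused service). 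Your closing paragraph gestures at ``backlogged workload'' and the $\times S$ sandwich, but never introduces this process or its unit-expected-departure property, so the displayed drift-plus-penalty bound does not follow from what you have set up.

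A second, smaller gap is in part (a): a bounded Ces\`aro average of $g(\tilde{Q}_m)=\log(1+\tilde{Q}_m)$ does not ``lift'' to a bounded average of $\tilde{Q}_m$ by monotonicity --- by concavity of $g$ the implication runs the wrong way. The paper instead concludes stability by a Foster--Lyapunov argument: the drift of $\sum_m G(\tilde{J}_m)$ is negative outside a bounded set and the one-step increments of $\tilde{J}$ are bounded, which yields positive recurrence and a finite mean queue length. You need that (or an equivalent argument) to obtain (a).
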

	\begin{proof}
			See Appendix~\ref{Proof:Age-Algo-Optimality}.
	\end{proof}	
\section{OFFLINE MIGRATION: SYSTEM MODEL}\label{DCN-sec:offline-mig-sys-model}
In this section, we describe the offline migration model. We assume that job size information is known at the time of job arrival. Further, the job arrival and server running cost models are the same as in Section~\ref{DCN-sec:system-model}. For ease of analysis we consider only static power consumption cost but, the analysis holds for the affine cost structure.
\subsection{Migration delay}\label{DCN-subsec:mig-delay-sys-model}
Recall that, in offline migration the preempted job experiences downtime and no migration cost. Further, the downtime depends on the network bandwidth and the amount of data to be copied. For ease of analysis we assume that a preempted job suffers one slot duration of downtime~(migration delay), i.e., the preempted job starts its execution at the destination server after one slot duration. In our analysis, we use an equivalent model for the migration delay caused by the job preemptions as explained in the following. Instead of starting the execution of the preempted job after migration delay~(one slot), we increase the preempted job execution time by one slot and assume that the job can start execution immediately (zero migration delay) on the destination server. For example, consider a job type-$(m,s)$ scheduled on server $l$ at time $t$, and preempted at $t+1$ and moved to the sever $l'$. At slot $t+1$ the job type is $(m,s-1)$. Since, the migration delay is one slot duration the job execution~(type-$(m,s-1)$) will start at slot $t+2$ on the server $l'$. But, in our model we assume that the job type~$(m,s)$ started execution at $t+1$ at the destination server $l'$ which is equivalent to starting the job type-$(m,s-1)$ at $t+2$.
\subsection{Queue Length Evolution}
Using the migration delay model explained in Section~\ref{DCN-subsec:mig-delay-sys-model}, we write the queue evolution,
\begin{align}\label{DCN-eq:mig-dely-q-evol}
Q_{m,s}&(t+1) \nonumber \\
=& Q_{m,s}(t)-\sum_lN^l_{m,s}(t)+A_{m,s}(t)+\sum_lN^l_{m,s+1}(t)\nonumber \\&+\sum_l\Big(N^l_{m,s}(t-1)-N^l_{m,s-1}(t)\Big)^+ \nonumber \\
&-\sum_l\Big(N^l_{m,s+1}(t-1)-N^l_{m,s}(t)\Big)^+ \text{ for } 1<s\leq S\nonumber \\ 
Q_{m,1}&(t+1) \nonumber \\
=& Q_{m,1}(t)-\sum_lN^l_{m,1}(t)-\sum_l\Big(N^l_{m,2}(t-1)-N^l_{m,1}(t)\Big)^+\nonumber \\
&+A_{m,1}(t)
\end{align}
where $N^l_{m,s}(t)$ is the job service configuration at server $l$ at slot $t$, and $\Big(N^l_{m,s}(t-1)-N^l_{m,s-1}(t)\Big)^+$ are the number of type-$(m,s-1)$ jobs that are migrated from server $l$ at slot $t$. Note that, the queue evolution is different from Section~\ref{DCN-subsec:online-mig-Q-evol}. From \eqref{DCN-eq:mig-dely-q-evol}, we write the workload evolution for VM type-$m$ as
\begin{align}\label{DCN-eq:mig-dely-wl-evol}
\sum_ssQ_{m,s}(t+1)=&\sum_ssQ_{m,s}(t)-\sum_{l,s}N^l_{m,s}(t)+\sum_ssA_{m,s}(t)\nonumber \\&+\sum_{l,s>1}\big(N^l_{m,s}(t-1)-N^l_{m,s-1}(t)\big)^+ ,
\end{align}
where $\sum_{l,s>1}\big(N^l_{m,s}(t-1)-N^l_{m,s-1}(t)\big)^+$ is the extra workload due to the type-$m$ VM migrations. Observe that the extra workload accounts for the job migration delays, so does not appear in the online migration model. We aim to minimize time average value of \eqref{DCN-eq:cost-online} with $U=0$.

\section{OFFLINE MIGRATION: SCHEDULING} \label{DCN-sec:offline-opt-scheduling}
We present two algorithms, Queue-length based Biased Max-Weight (Q-BMW) and its refined version as Algorithm 4. Q-BMW is motivated by the algorithms in~\cite{Ping2017-DelayOpt} and \cite{chang2018DCNC}. It is provably throughput optimal and yields optimal server running cost. Recall that optimizing server running cost necessitates server consolidation and so job migration. On the other hand, job migrations in the offline migration model incur downtime~(migration delay) reflected as extra workload~(see \eqref{DCN-eq:mig-dely-wl-evol}). So, optimal 
scheduling policies for offline migration must discourage frequent migrations to ensure stability, e.g., there should be fewer and fewer migrations as the arrival rates approach the Pareto boundary of the capacity region.

\subsection{Q-BMW}
In the offline migration model, when a scheduling algorithm prescribes a job service configuration that causes job migrations, it adds extra workload to the system, i.e., apart from the workload of job execution additional workload accounting for the tasks related to job migration added to the queues~(see \eqref{DCN-eq:mig-dely-wl-evol}). So, the scheduling algorithms should discourage prescribing job service configurations that cause job migrations. The idea behind Q-BMW is similar to that in \cite{Ping2017-DelayOpt} and \cite{chang2018DCNC}. Q-BMW adds a bias towards the job service configurations that do not cause job migrations. In other words, Q-BMW prescribes a job service configuration that causes job migrations only when the weight of this configuration exceeds the weights of all the other job service configurations that do not cause migrations by enough margin. 
 
\par We now describe Q-BMW. Let $\mathcal{H}^l(t)$ be the set of all job service configurations for server $l$ that do not cause job migration at this server at slot $t$. More precisely,
\begin{align*}
\mathcal{H}^l (t)= \Big\{N \in \mathcal{N}^l: &N_{m,s}\geq N^l_{m,s+1}(t-1), \nonumber \\&\forall s\in \{1, \dots S-1\},\forall m \Big\}.
\end{align*}

Let us define a function $ F: \mathbb{Z}^{MS} \rightarrow \mathbb{R} $, referred to as the bias function, as follows. 
\begin{align}\label{DCN-eq:F}
F(q) = \max\Bigg\{1, \Big(\sum_m\sum_ssq_{m,s}\Big)^{\alpha}\Bigg\}, 
\end{align}
where $\alpha\in(0,1)$. At every slot $t$, for every server $l$, Q-BMW evaluates $\bar{N}^l(t)$ as in \eqref{eq:Q-BMW} and \eqref{eq:Q-BMW-2} where $\tau^l(t)$ is the last time a job migration happened at server $l$. It then prescribes a service configuration $ N^l(t) $ obtained from $ \bar{N}^l(t) $ following steps $ 2, 3 $ and $ 4 $ of Algorithm~\ref{Algo:Size}. Observe that Q-BMW offers a bias 
\[
\frac{\sum_m \Big(\sum_ssQ_{m,s}(t)\Big) \Big(\sum_s N_{m,s}\Big)-VC_1^l(N)}{F(Q(\tau^l(t)))}
\]
to the job service configurations that do not cause any job migration at server $ l $ at slot $ t $. Further, the bias at slot t also depends on the queue lengths at $\tau^l(t)$. The numerator of the bias term being linear and the denominator being sublinear, this term tends to assume high values for large queue lengths. Consequently, Q-BMW discourages job service configurations that cause job migrations when queue lengths are high. This is also reflected in Lemma~\ref{lemma:Tk} where the interval between such service configurations is lower bounded by an increasing function of queue lengths. We formally present Q-BMW as Algorithm~\ref{Algo:Q-BMW} below.
\begin{algorithm}
	\caption{Q-BMW}
	\label{Algo:Q-BMW}
	\begin{algorithmic}[1]
		\State Initialize $ \mathcal{H}^l(1) = \emptyset $, $ F(Q(\tau^l(1)))=1, \forall l \in[L]$, 
		\State For all $ t\geq 1 $, for all $l\in [L]$
		
		\begin{align} \label{eq:Q-BMW}
		&\mathcal{\bar{N}}^{l}(t) \nonumber\\
		 &=\argmax_{N \in \mathcal{N}^{l}}\Bigg(1+\frac{\mathbbm{1}_{\{N \in \mathcal{H}_l(t)\}}}{F\big(Q(\tau^l(t))\big)}\Bigg) \Bigg[\sum_m \Big(\sum_ssQ_{m,s}(t)\Big) \nonumber\\
		&\Big(\sum_s N_{m,s}\Big)- VC_1^l(N)\Bigg],
		\end{align}
		\begin{align}\label{eq:Q-BMW-2}
		N^l(t) \in \argmax_{N \in \bar{\mathcal{N}}^{l}(t)}\sum_{m,s}Q_{m,s}(t)N_{m,s}
		\end{align}
		\If {$\bar{N}^l(t)\in \mathcal{H}_l(t)$}
		\begin{align*}
		\tau^l(t+1) = \tau_l(t)
		\end{align*}
		\Else \begin{align*}
		\tau^l(t+1)=t
		\end{align*}
		\EndIf	\\

		\textbf{Obtaining $ N(t) $ from $ \bar{N}(t)$:}
		$N(t)$ is obtained from $\bar{N}(t)$ following steps $2,3$ and $4$ of Algorithm~\ref{Algo:Size}.
	\end{algorithmic}
\end{algorithm}
\paragraph*{Discussion}
Even though Q-BMW is motivated by the algorithms from ~\cite{Ping2017-DelayOpt} and \cite{chang2018DCNC} there are significant differences between them as explained below.  
\begin{itemize}
	\item In \cite{Ping2017-DelayOpt} and \cite{chang2018DCNC}, at time slot $t$ using the same user configuration and flow configuration respectively as at slot $t-1$ do not cause migration. Further, selecting any configuration other than the configuration at $t-1$ leads to migration. Whereas in Q-BMW at every slot $t$, there is a set of job service configurations,~$\mathcal{H}^l(t)$ that do not cause migration and these are different from the scheduled job service configuration at $ t-1$.
	\item In \cite{Ping2017-DelayOpt} and \cite{chang2018DCNC}, the authors assume that the server is completely stalled during the migration process. Whereas we assume that only execution of the migrated jobs is stalled and other jobs are executed without any interruption.
	\item In \cite{chang2018DCNC} the authors consider persistent flow based model and in our work, we consider a non-persistent, job based model. 
\end{itemize}
\par Let $ t^l_k, k \geq 1 $ be the successive time slots at which job migrations happen at server $ l; t^l_1 = 1 $ and for $ k \geq 1, $
\begin{align*}
t^l_{k+1} = \min\{t>t_k^l: \tau^l(t+1)=t\}.
\end{align*}
 Let us also define $ T^l_k, k \geq 1 $, to be the time intervals between $ k $th and $ (k+1) $th job migration epochs at server $ l $; $ T^l_k = t^l_{k+1} - t^l_k $. The following lemma gives a lower bound on $ T^l_k $.
\begin{lemma}\label{lemma:Tk}
 	Q-BMW, for all $k, $ and $l\in [L]$,
	\begin{align}
	T^l_k \geq \max \Big\{1, k_0 \Big(\sum_{m,s}sQ_{m,s}(t^l_k)-2V\Big)^{1-\alpha}\Big\}
	\end{align}
	where $k_0 =\frac{1}{M\Big(N^2_{\max}+MS^2A_{\max}N_{\max}\Big)} $
\end{lemma}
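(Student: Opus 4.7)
Let $t_0 := t^l_k$, $t_1 := t^l_{k+1}$, $\sigma_m(t) := \sum_s s Q_{m,s}(t)$, and $\sigma(t) := \sum_m \sigma_m(t)$. The bound $T^l_k \ge 1$ is immediate since consecutive migration epochs lie in distinct slots; it remains to prove the second bound in the nontrivial regime $\sigma(t_0) > 2V$. Write $W(N, t) := \sum_m \sigma_m(t) \sum_s N_{m,s} - V C_1^l(N)$ for the unboosted objective appearing in \eqref{eq:Q-BMW}. Because no migration occurs in $(t_0, t_1)$ at server $l$, the pointer satisfies $\tau^l(t) = t_0$ throughout this interval, so the bias in \eqref{eq:Q-BMW} is fixed at $1 + 1/F(Q(t_0))$.

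The plan is to translate the migration event at $t_1$ into a quantitative weight gap and argue that this gap cannot form in too few slots. Since $\bar{N}^l(t_1) \notin \mathcal{H}^l(t_1)$ is the boosted argmax at $t_1$, taking $N^\star := \argmax_{N \in \mathcal{H}^l(t_1)} W(N, t_1)$ yields
\begin{equation*}
W(\bar{N}^l(t_1), t_1) - W(N^\star, t_1) \ge W(N^\star, t_1)/F(Q(t_0)).
\end{equation*}
I will upper bound the left side by a constant times $T^l_k$ via time-Lipschitzness of $W$, and lower bound $W(N^\star, t_1)$ via an explicit witness inheriting the bulk of the unboosted optimum at $t_0$.

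For the upper bound, observe that $\bar{N}^l(t_0)$ is the unboosted $t_0$-argmax (it itself lies outside $\mathcal{H}^l(t_0)$, hence carries no bias), so in particular $W(\bar{N}^l(t_1), t_0) \le W(\bar{N}^l(t_0), t_0)$. For any fixed configuration $N$, $|W(N, t+1) - W(N, t)| \le N_{\max} \sum_m |\sigma_m(t+1) - \sigma_m(t)|$; combined with the workload evolution \eqref{DCN-eq:mig-dely-wl-evol}, the per-slot bounds $A_{m,s}(t) \le A_{\max}$, the service bound $\sum_s N^l_{m,s}(t) \le N_{\max}$, and the per-server migration-extra bound $\sum_{s>1}(N^l_{m,s}(t-1) - N^l_{m,s-1}(t))^+ \le N_{\max}$, careful bookkeeping yields a per-slot Lipschitz constant of order $M(N_{\max}^2 + M S^2 A_{\max} N_{\max})$ for the server-$l$ weight function. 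For the lower bound on $W(N^\star, t_1)$, I construct a witness $N^\circ \in \mathcal{H}^l(t_1)$: start from a maximizer of $\sum_m \sigma_m(t_0) n_m$ over $\mathcal{N}^l$, add the mandatory continuation entries $N_{m,s} \ge N^l_{m,s+1}(t_1 - 1)$, and rebalance via at most $N_{\max}$ swaps to restore $\mathcal{N}^l$-feasibility using $\sum_{m,s} N^l_{m,s}(t_1 - 1) \le N_{\max}$. A matching backwards Lipschitz estimate then gives $W(N^\star, t_1) \ge W(N^\circ, t_1) \ge \sigma(t_0) - 2V - T^l_k \cdot O(1)$, where the $-2V$ aggregates the $V C_1^l$ corrections arising in the comparisons at both $t_0$ and $t_1$.

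Substituting both bounds into the migration inequality, using $F(Q(t_0)) \le \max(1, \sigma(t_0)^\alpha)$, and rearranging yields $T^l_k \ge k_0(\sigma(t_0) - 2V)^{1-\alpha}$ with $k_0 = 1/[M(N_{\max}^2 + M S^2 A_{\max} N_{\max})]$ in the regime where the right side exceeds $1$. The main obstacle is the witness construction for $N^\circ$: producing a configuration simultaneously in $\mathcal{N}^l$ and $\mathcal{H}^l(t_1)$ whose weight at $t_1$ matches the unboosted maximum at $t_0$ up to the $-2V$ correction and linear-in-$T^l_k$ Lipschitz terms, which requires delicate combinatorial bookkeeping of both the capacity and continuation constraints.
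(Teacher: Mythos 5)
Your overall skeleton is the right one, and it is the same one the paper itself relies on (the paper gives no self-contained proof, deferring to the analogous lemmas in its references on biased max-weight with reconfiguration delay): a migration at $t^l_{k+1}$ forces a multiplicative weight gap of size $W(N^\star,t_1)/F(Q(t^l_k))$ between the migrating argmax and the best non-migrating configuration, the per-slot change of every configuration's weight is bounded by a constant of order $M(N_{\max}^2+MS^2A_{\max}N_{\max})$, and combining the two with $F(Q(\tau))\leq \big(\sum_{m,s}sQ_{m,s}(\tau)\big)^{\alpha}$ yields the $(\cdot)^{1-\alpha}$ lower bound on $T^l_k$. The $T^l_k\geq 1$ part and the Lipschitz bookkeeping via \eqref{DCN-eq:mig-dely-wl-evol} are fine.

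The genuine gap is exactly the step you flag as "the main obstacle," and the mechanism you propose for it does not work as stated. You construct the witness $N^\circ\in\mathcal{H}^l(t_1)$ by taking an unboosted maximizer at $t_0$, forcing in the continuation entries $N_{m,s}\geq N^l_{m,s+1}(t_1-1)$, and "rebalancing via at most $N_{\max}$ swaps," with the implicit claim that the resulting weight loss is absorbed into the $-2V$ and $O(T^l_k)$ correction terms. But each swap that evicts a type-$m$ VM to make room for a forced type-$m'$ continuation entry changes the weight by $\sigma_{m'}(t_1)-\sigma_m(t_1)$, which can be as negative as $-\max_m\sigma_m(t_1)$; this is of the same order as $\sigma(t_0)$ itself, not $O(1)$, so bounding the \emph{number} of swaps by $N_{\max}$ bounds nothing useful. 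This difficulty is specific to the present setting and absent from the cited switch-scheduling proofs: there the non-reconfiguring option is the single previous schedule, which is always available at full weight, whereas here $\mathcal{H}^l(t)$ is a time-varying set pinned to $N^l(t-1)$, and job completions shrink the previous configuration. The repair is to build the witness from the previously scheduled configuration rather than from the $t_0$-maximizer: the shifted configuration $N'_{m,s}=N^l_{m,s+1}(t_1-1)$ lies in $\mathcal{H}^l(t_1)$ and is feasible, and its per-type totals can be topped up (within the capacity freed by the $N^l_{m,1}(t_1-1)$ completing jobs, using fresh jobs of the same types, which exist in the long-queue regime that is the only regime where the claimed bound is nontrivial) so that $\sum_sN^\circ_{m,s}=\bar{N}^l_m(t_1-1)$ for every $m$. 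One then chains this back to $t_0$ using the fact that $\bar{N}^l(t)\in\mathcal{H}^l(t)$ for all $t\in(t_0,t_1)$ and that each $\bar{N}^l(t)$ is a boosted argmax, which costs only per-slot Lipschitz terms plus an $O(V)$ correction. Until the witness is constructed this way (or by some equivalent device), the central inequality $W(N^\star,t_1)\geq \sum_{m,s}sQ_{m,s}(t_0)-2V-O(T^l_k)$ is unproven, and the lemma does not follow from your sketch.
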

\begin{IEEEproof}
	The proof is similar to the proofs of 
	\cite[Lemma 3]{Ping2017-DelayOpt} and \cite[Lemma 1]{chang2018DCNC}.
\end{IEEEproof}
\par The following theorem establishes throughput optimality and cost optimality of Q-BMW.
\begin{theorem}\label{Theorem:Mig-Optimality}
 Q-BMW is throughput optimal and achieves $[O(1/V),O(V)]$ average cost delay tradeoff.	
\begin{align*}
&\lim_{T \to \infty}\frac{1}{T}\sum_{t=1}^T\sum_{m,s}\E[sQ_{m,s}(t)] \leq \frac{B+V\bar{C}_{\text{opt}}(\lambda+\e)}{\e}\\
&\lim_{T \to \infty}\frac{1}{T}\sum_{t=1}^T\sum_l\E[C_1^l(t)] \leq \frac{B}{V}+\bar{C}_{\text{opt}}(\lambda+\e)
\end{align*}
where $B$ depends on system parameters $\alpha, A_{\max}, N_{\max}, M,L$. and $\lambda+\e \in \Lambda$ for some $\e>0$.
\end{theorem}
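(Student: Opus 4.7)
The plan is to run a Lyapunov drift-plus-penalty analysis on the workload vector, using Lemma~\ref{lemma:Tk} to absorb the extra-workload term produced by migration delays. Concretely, I would let $W_m(t) \coloneqq \sum_s sQ_{m,s}(t)$ and work with the quadratic Lyapunov function $L(t) \coloneqq \tfrac{1}{2}\sum_m W_m(t)^2$. Squaring the workload recursion~\eqref{DCN-eq:mig-dely-wl-evol} and dropping bounded square terms into a constant $B_0$, the one-slot conditional drift-plus-penalty satisfies
\begin{align*}
\Delta(t) + V\,\mathbb{E}\!\left[\textstyle\sum_l C_1^l(N^l(t)) \,\big|\, Q(t)\right]
&\leq B_0 + \sum_m W_m(t)\Big(\textstyle\sum_s s\lambda_{m,s}\Big) \\
&\quad - \mathbb{E}\!\left[\textstyle\sum_l \Psi^l(N^l(t)) \,\big|\, Q(t)\right] + \mathcal{E}(t),
\end{align*}
where $\Psi^l(N) \coloneqq \sum_m W_m(t)(\sum_s N_{m,s}) - V C_1^l(N)$ is the unbiased max-weight objective and $\mathcal{E}(t) \coloneqq \sum_m W_m(t)\,\sum_{l,s>1}\mathbb{E}[(N^l_{m,s}(t-1)-N^l_{m,s-1}(t))^+\mid Q(t)]$ is the extra-workload penalty produced by the offline migrations.

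Next I would compare Q-BMW's choice to a randomized stationary policy $\pi^\star$ that stabilizes $\lambda + \e$ and attains server running cost $\bar{C}_{\text{opt}}(\lambda+\e)$, per the discussion below the statement of Lemma~\ref{Lemma:Markov-The-Best}. Since $\pi^\star$ is independent of the current state, its configurations lie trivially in $\mathcal{N}^l$ and, crucially, can be taken to induce zero migration bias in expectation. Because Q-BMW maximizes a \emph{biased} weight, the standard inequality becomes
\[
\sum_l \Psi^l(\bar N^l(t)) \;\geq\; \sum_l \Psi^l(\pi^{\star,l}(t)) \;-\; \underbrace{\sum_l \frac{\max_{N \in \mathcal{H}^l(t)}\Psi^l(N)}{F(Q(\tau^l(t)))}}_{=:\,\mathcal{B}(t)},
\]
and taking expectations under $\pi^\star$ produces the $-\e\sum_m W_m(t) + V\bar{C}_{\text{opt}}(\lambda+\e)$ terms in the usual way. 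The remaining task is to show that $\mathcal{B}(t) + \mathcal{E}(t)$ is sublinear in $\sum_m W_m(t)$, so that the negative drift $-\e\sum_m W_m(t)$ dominates.

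This is the main obstacle, and it is where Lemma~\ref{lemma:Tk} does the real work. Between consecutive migration epochs at server $l$, the algorithm incurs no migration, so the per-slot extra workload averaged over $[t_k^l,t_{k+1}^l)$ is at most $N_{\max}/T_k^l$. Plugging in $T_k^l \geq k_0(W(t_k^l)-2V)^{1-\alpha}$ and a Lipschitz argument to relate $W(t)$ inside the interval to $W(t_k^l)$ (the workload can change by at most a bounded amount per slot) shows $\mathcal{E}(t) = O\bigl((\sum_m W_m(t))^{\alpha}\bigr)$; likewise $\mathcal{B}(t) = O\bigl((\sum_m W_m(t))^{1-\alpha} \cdot \sum_m W_m(t)\bigr)/F = O(\sum_m W_m(t)^{\alpha})$ since $F(q)$ grows like $(\sum_m W_m)^{\alpha}$. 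Absorbing these sublinear terms by choosing $W$ large, the drift-plus-penalty becomes
\[
\Delta(t) + V\,\mathbb{E}\!\left[\textstyle\sum_l C_1^l(N^l(t))\,\big|\,Q(t)\right] \;\leq\; B + V\bar{C}_{\text{opt}}(\lambda+\e) - \e \sum_m W_m(t)
\]
for some constant $B = B(\alpha, A_{\max}, N_{\max}, M, L)$.

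Finally I would telescope this bound over $t = 1,\dots,T$ and divide by $T$. The standard Lyapunov-Foster argument yields the workload bound $\frac{1}{T}\sum_{t=1}^T \sum_{m,s} \mathbb{E}[sQ_{m,s}(t)] \leq (B + V\bar{C}_{\text{opt}}(\lambda+\e))/\e$, which proves throughput optimality. Dropping the non-positive queue-drift term and keeping only the penalty produces the cost bound $\frac{1}{T}\sum_{t=1}^T \sum_l \mathbb{E}[C_1^l(N^l(t))] \leq B/V + \bar{C}_{\text{opt}}(\lambda+\e)$. Sending $\e \downarrow 0$ and using continuity of $\bar{C}_{\text{opt}}$ (invoked just after Lemma~\ref{Lemma:Markov-The-Best}) gives the $[O(1/V), O(V)]$ tradeoff. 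The delicate step throughout is juggling two time scales, the per-slot drift and the interval $T_k^l$ between migrations, so that the migration-induced extra workload and the bias error are both controlled by the same $F(q)$-$T_k^l$ coupling supplied by Lemma~\ref{lemma:Tk}.
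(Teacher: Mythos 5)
Your proposal follows essentially the same route as the paper, which gives no self-contained argument for this theorem but defers to the adaptive/biased max-weight proofs of \cite[Theorem~1]{Ping2017-DelayOpt} and \cite[Theorem~2]{chang2018DCNC}: a quadratic workload Lyapunov drift, comparison against a static randomized policy achieving $\bar{C}_{\text{opt}}(\lambda+\e)$, and frame-wise amortization of both the migration-induced extra workload and the bias error via the lower bound on $T^l_k$ from Lemma~\ref{lemma:Tk}. One small slip: the bias error is $\mathcal{B}(t)=O\big(\sum_m W_m(t)\big)/F=O\big((\sum_m W_m(t))^{1-\alpha}\big)$, not $O\big((\sum_m W_m(t))^{\alpha}\big)$, but since $\alpha\in(0,1)$ this is still sublinear and the argument goes through unchanged.
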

\begin{IEEEproof}
	The proof is similar to the proofs of \cite[Theorem $1$]{Ping2017-DelayOpt} and \cite[Theorem $2$]{chang2018DCNC}.
\end{IEEEproof}
\begin{remark}\label{remark:vary-alpha}
	For a fixed $V$, the values of $\alpha$ close to zero imply more bias to the job service configurations that do not cause job migrations and hence fewer job migrations i.e., less workload is added to the system due to the migrations. The values of $\alpha$ close to one hence less bias to the job service configurations that do not cause job migrations and hence more job migrations~(see~Figure \ref{fig:alpha-mig-delay-switchins}(c)).  
\end{remark}

%In Q-BMW, more bias is given to the VM configurations that do not cause VM migrations and the multiplier of the bias term is same for all the VM configurations that do not cause VM migrations, $\Big(\sum_{m,s}Q_{m,s}\Big)^{-\alpha}$. Instead, we want to penalize the VM configurations that cause migrations and the penalty should be a function of number of migrations caused by the VM configuration. Motivated by this idea we propose Algorithm~\ref{Algo:Mig-Delay} in the following.
%\begin{remark}
%	In proving Theorem~\ref{Theorem:Mig-Optimality} we assume that all the jobs suffer downtime~(migration delay) for one slot when a max-weight schedule is implemented on the server (i.e., at slots $t_k, k\in\{1,2,\dots\}$). But in implementation, only the preempted jobs suffer downtime. 
%\end{remark}
\subsection{A Modified Algorithm}
In this section, we propose a refined version of Q-BMW as Algorithm~\ref{Algo:Mig-Delay}. We first explain the motivation behind Algorithm~\ref{Algo:Mig-Delay}. Recall that Q-BMW offers a bias to the job service configurations that do not cause job migration but does not distinguish other configurations on the basis of the number of job migrations they cause. We modify this algorithm whereby each job service configuration receives a negative bias (or, penalty) commensurate with the number of job migrations is it causes (see~\eqref{DCN-eq:refined-Q-BMW}). In other words, a job service configuration that would lead to more job migrations receives more penalty than the one that would cause fewer job migrations, and the configuration that does not cause any job migration is not penalized. The penalty also increases sublinearly with the queue lengths implying a higher penalty for larger queue lengths as in Q-BMW. Consequently, Algorithm~\ref{Algo:Mig-Delay} also discourages job migrations when queue lengths are high. We have not been able to prove the throughput optimality of Algorithm~\ref{Algo:Mig-Delay}. 
%\color{blue}  for the following reason We could not write the difference of the values of the max-weight algorithm with server running cost and Algorithm \ref{Algo:Mig-Delay} as a sublinear function of queue lengths\color{black}. 
However, we present its performance in Section~\ref{DCN-sec:simulations}. From Figures~\ref{fig:mig-delay-switchins} and~\ref{fig:alpha-mig-delay-switchins} we observe that Algorithm~\ref{Algo:Mig-Delay} has less number of mean job migrations and less number of active servers compared to Q-BMW.
\begin{algorithm}
	\caption{ Refined Q-BMW}
	\label{Algo:Mig-Delay}
	\begin{algorithmic}[1]
		\State At every slot $t$  given $Q(t)=q, N(t-1)=n$, for all $l \in [L]$,
		\begin{align} \label{DCN-eq:refined-Q-BMW}
		&\mathcal{\bar{N}}^{l}(t) \nonumber\\&= \argmax_{N \in \mathcal{N}^{l}} \Bigg[\sum_m \Big(\sum_ssq_{m,s}\Big) \Big(\sum_s N_{m,s}\Big)-VC_1^l(N)\nonumber\\
		  &-\Bigg(\sum_{m} \Big(\sum_{s>1}(n_{m,s}^l-N_{m,s-1})^+\Big)\Big(\sum_ssq_{m,s}\Big)\Bigg)^{1-\alpha}\Bigg]	
		\end{align}
		\begin{align}\label{DCN-eq:refined-Q-BMW-2}
		N^l(t) \in \argmax_{N \in \bar{\mathcal{N}}^{l}(t)}\sum_{m,s}Q_{m,s}(t)N_{m,s}
		\end{align}
		\textbf{Obtaining $ N(t) $ from $ \bar{N}(t)$:}
		$N(t)$ is obtained from $\bar{N}(t)$ following steps $2,3$ and $4$ of Algorithm~\ref{Algo:Size}.
	\end{algorithmic}
\end{algorithm}

\section{Simulations} \label{DCN-sec:simulations}
In this section, we demonstrate the performance of the scheduling algorithms proposed in this work. First, we demonstrate the performance of online scheduling algorithms (Algorithm~\ref{Algo:Size} and Algorithm~\ref{Algo:UK-algo-1}), later offline scheduling algorithms (Q-BMW and Algorithm~\ref{Algo:Mig-Delay}) via simulation. We also compare Algorithm~\ref{Algo:Size} and Algorithm~\ref{Algo:UK-algo-1} performance to that of non-preemptive and preemptive algorithms proposed in~\cite{stheja2012loadbal}. We refer to the latter algorithms as Non-Preemptive and Preemptive, respectively. 
\par We consider a simple set-up with ten identical servers $ (L=10) $. Each server can host three types of VMs $ (M=3) $. The resource constraints are such that $(0,0,2),(0,1,1)$ and $(1,1,0)$ are three maximal VM configurations for each server. Further, the maximum size of the jobs $10$ time-slots $(S=10)$. For any given job type, the job arrival rates are same for all sizes, which makes the average job size $5.5$ slots. We consider job arrival rate vector $ \lambda_{m,s} = L\rho \frac{m}{165} $, where $\rho \in \{0.2, 0.4, 0.6, 0.8, 0.9, 0.95, 0.97, 0.99, 1.01\}$. Here $\rho$ is load intensity which indicates how far the job arrival rate vector is from boundary of the capacity region. Further $\rho = 1$ corresponds to $ \lambda $ being on the Pareto boundary of the capacity region.
\par In Figures~\ref{fig:Affine_Throughput}, ~\ref{fig:Affine_src_V_U10}, and \ref{fig:Affine_jmc_U_V5} we demonstrate the performance of Algorithm~\ref{Algo:Size} with affine server running cost structure \eqref{DCN-eq:generic_src}. Towards this, we consider $(c_0, c_1, c_2, c_3=(1,2,6,3)$. 
In Figure~\ref{fig:Affine_Throughput} we demonstrate the throughput optimality of Algorithm~\ref{Algo:Size}. We chose weighing parameters $(V,U)=(5, 10)$. We observe that Algorithm~\ref{Algo:Size} is throughput optimal.
\begin{figure}[h!]
		\begin{center}
			\includegraphics[width=2.35in,height=1.8in]{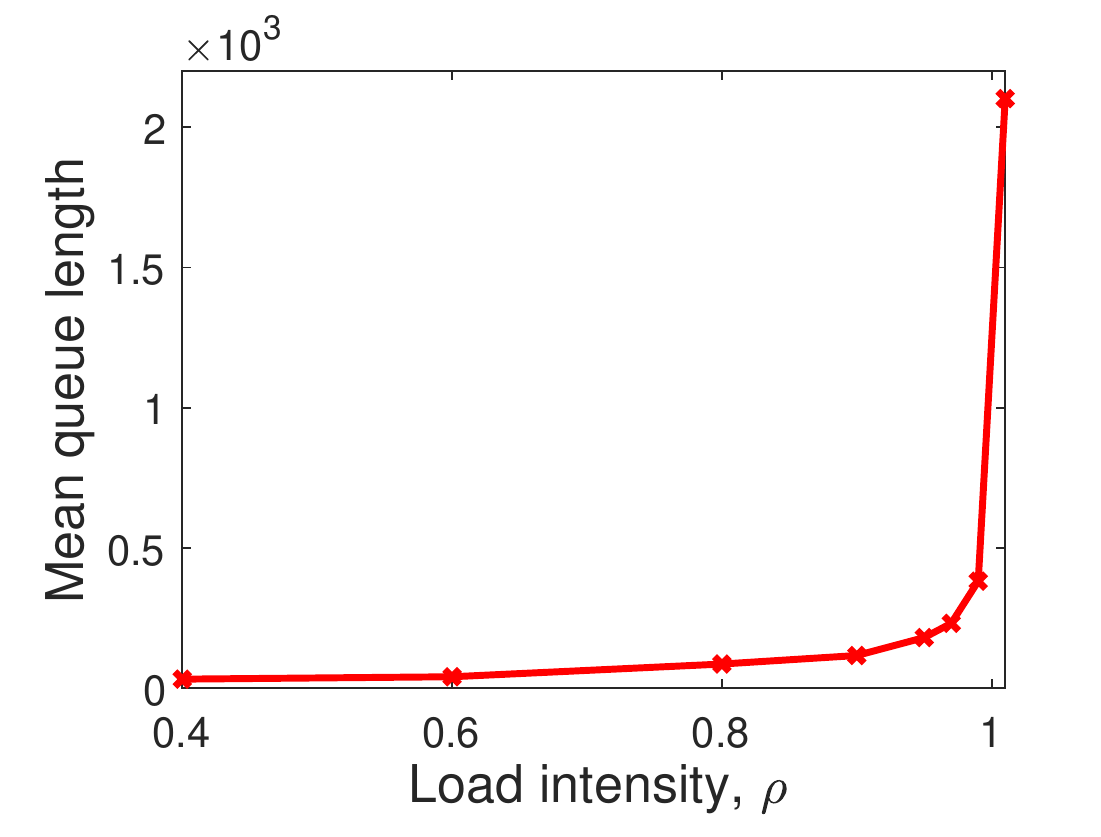}
			\caption{Illustrating the throughput optimality of Algorithm~\ref{Algo:Size} for $(V,U)=(5, 10)$ and $(c_0, c_1, c_2, c_3)=(1,2,6,3)$.}
			\label{fig:Affine_Throughput}
		\end{center}
\end{figure}
\par In Figure~\ref{fig:Affine_src_V_U10} we illustrate the impact of varying the weighing parameter $V$ on mean server running cost, mean queue lengths, and mean job migrations while keeping $U$ fixed. Further, we set $U=10, \rho=0.8$. The optimum server running cost for parameters $(c_0, c_1, c_2, c_3)$ and $\rho = 0.8$ is $ \bar{C}_{\text{opt}}(\lambda) = 56$. In Figure~\ref{fig:Affine_src_V_U10}(a) we observe that the mean server running cost 
decreases with increase in $V$ and approaches the optimum server running cost.
In Figure~\ref{fig:Affine_src_V_U10}(b) we observe that mean queue length increases with $V$. The reason is, for larger values of $V$, the algorithm prescribes a non-zero job configuration only if queue lengths are sufficiently large else the server is switched off. The jobs in the queues have to wait till a sufficient number of jobs arrive so that the algorithm can prescribe a non-zero job configuration. So the mean queue length increases with the increase in $V$. In Figure~\ref{fig:Affine_src_V_U10}(c) we observe that the mean job migrations increase with the increase in $V$. This reason is, for a larger value of $V$, there should be a sufficiently large number of jobs in the queues to schedule. If there are not enough jobs in the system then the scheduled jobs are preempted and cause an increase in the mean number of job migrations.
\begin{figure}[h!]
	\begin{center}
		\includegraphics[width=3.5in,height=1.3in]{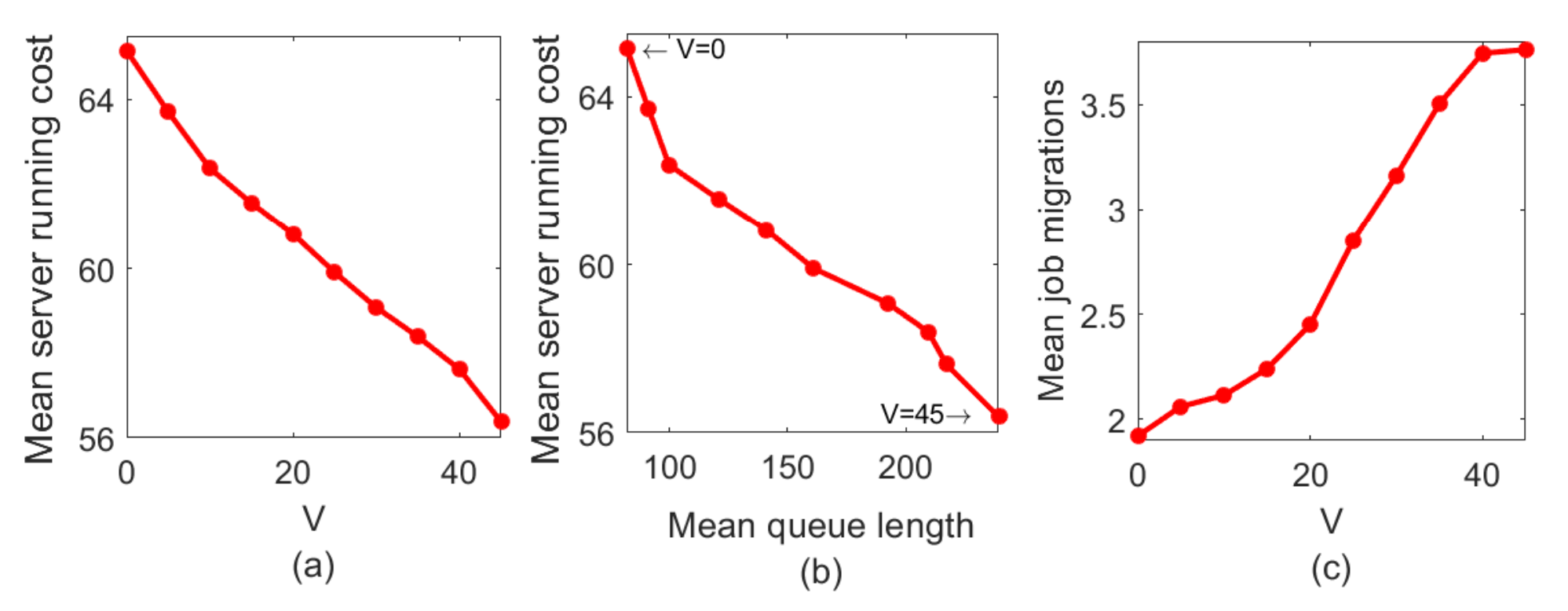}
		\caption{Illustrating the performance of Algorithm~\ref{Algo:Size} by varying $V$ and $U=10$, $(c_0, c_1, c_2, c_3)=(1,2,6,3)$.}
		\label{fig:Affine_src_V_U10}
	\end{center}
\end{figure}
\par In Figure~\ref{fig:Affine_jmc_U_V5} we illustrate the impact of varying the weighing parameter $U$ on mean job migrations, mean queue lengths, and mean server running cost while keeping $V$ fixed. Further, we set $V=5, \rho=0.8$. In Figure~\ref{fig:Affine_jmc_U_V5}(a) we observe that mean job migrations decrease with the increase in $U$. In Figure~\ref{fig:Affine_jmc_U_V5}(b) we observe that the mean queue length decreases with the increase in $U$. The reason is, as $U$ increases the job preemptions decrease i.e., the jobs running on the servers are not preempted and servers will be in on state, this increases average server on duration i.e., the server running cost and servers can serve more jobs. In Figure~\ref{fig:Affine_jmc_U_V5}(c) we observe that mean server running cost increases with the increase in $U$ as the higher values of $U$ discourage turning off the servers as explained for Figure~\ref{fig:Affine_jmc_U_V5}(b).
\begin{figure}[h!]
	\begin{center}
		\includegraphics[width=3.5in,height=1.3in]{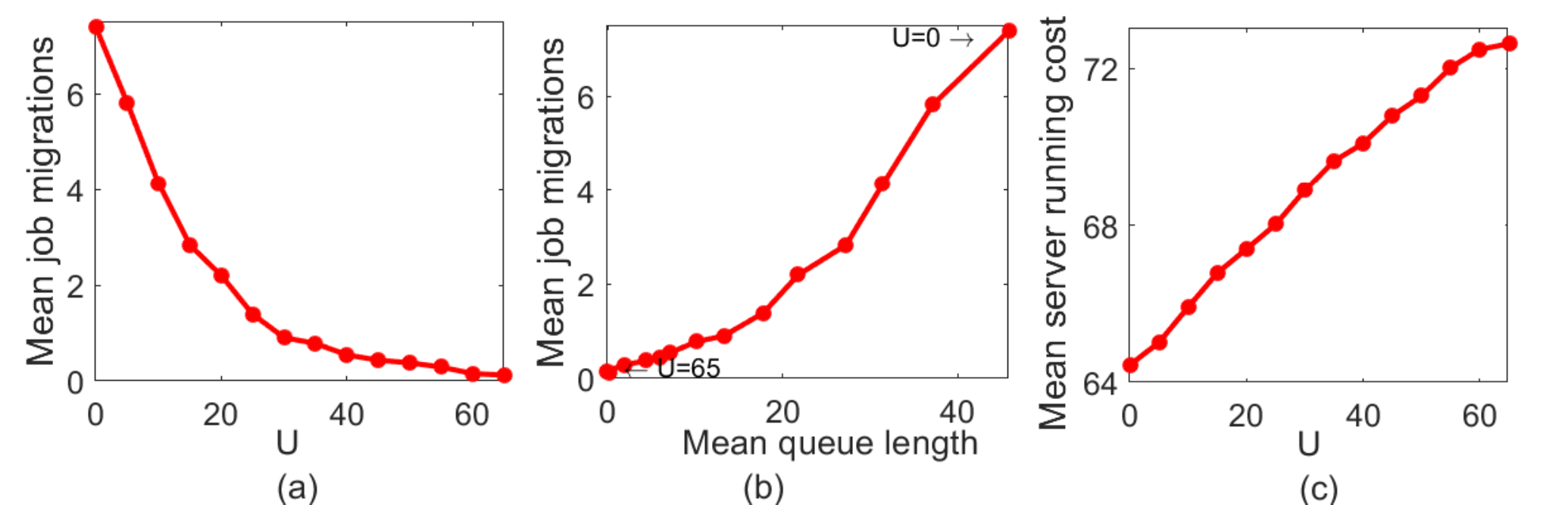}
		\caption{Illustrating the performance of Algorithm~\ref{Algo:Size} by varying $U$ and $V=10$, $(c_0, c_1, c_2, c_3)=(1,2,6,3)$.}
		\label{fig:Affine_jmc_U_V5}
	\end{center}
\end{figure}
\par For the rest of the simulations, we consider the binary server power model with $c_0 =1$~(see Remark~\ref{rem:bin-server}). Further, for $c_0 =1$, the mean server running is nothing but the mean number of active servers. In Figure \ref{fig:Throughput} we demonstrate the throughput optimality of the proposed algorithms. We chose $(V,U)$ as $(30, 10)$ and $(3,2)$ for Algorithm~\ref{Algo:Size} and Algorithm~\ref{Algo:UK-algo-1}, respectively. The selection of smaller values of $(V,U)$ for Algorithm~\ref{Algo:UK-algo-1} is explained in the following paragraphs. In Non-Preemptive we chose super time slot as $60$ time slots (see \cite{stheja2012loadbal} for more details). We observe that both the proposed algorithms are throughput optimal. Further, Preemptive has the least mean queue length among all algorithms since it does not consider any cost. In Preemptive, the servers are turned on whenever the job queues are nonempty, irrespective of the number of jobs available in the queues. Thus, preemptive has the least mean queue length among all the algorithms. Also, for higher load intensities, Non-Preemptive has the largest mean queue length.
\begin{figure}[h!]
		\begin{center}
			\includegraphics[width=2.35in,height=1.8in]{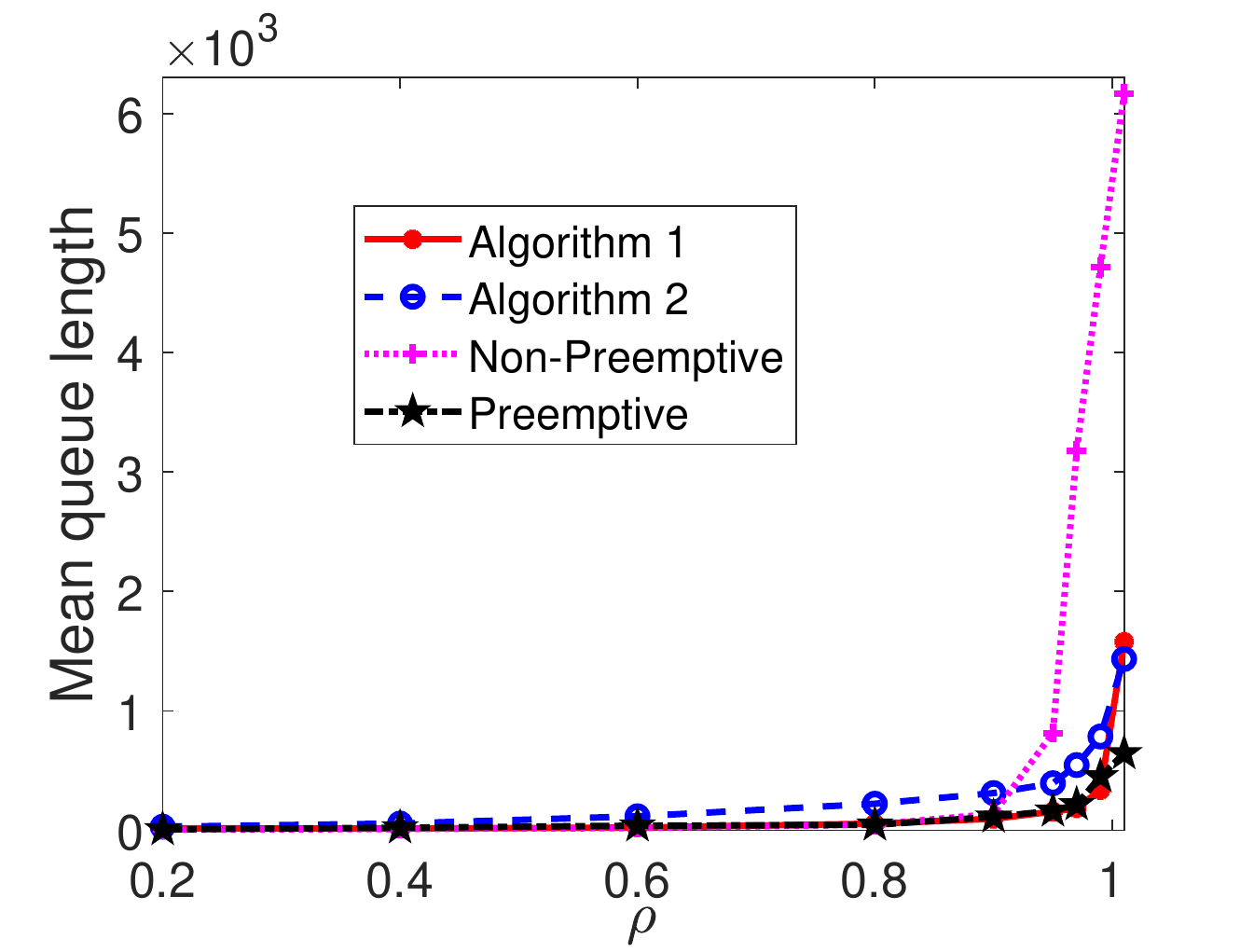}
			\caption{Illustrating the throughput optimality of different algorithms.}
			\label{fig:Throughput}
		\end{center}
\end{figure}

	\par In Figures~\ref{fig:Varying-V} and \ref{fig:QvsV} we illustrate the impact of varying the weighing parameter $V$ on the mean server running cost and mean queue lengths while keeping $U$ fixed. For Algorithm~\ref{Algo:Size}, we set $U=10$ and for Algorithm~\ref{Algo:UK-algo-1} we set $U=1$. We set $\rho = 0.8$ for further simulations.
	\begin{figure}[h!]
		\begin{center}
			\includegraphics[width=3.5in,height=1.7in]{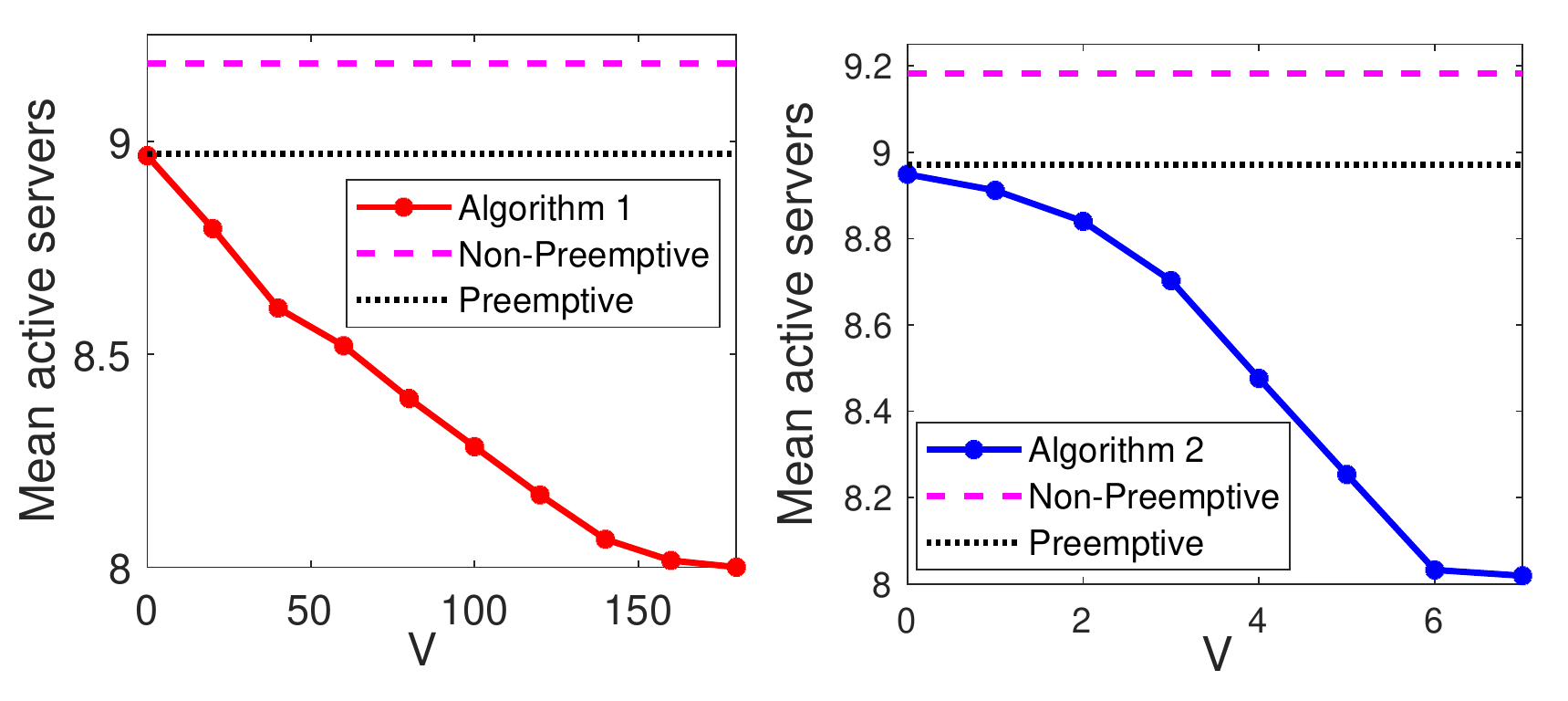}
			\caption{ Mean number of active servers Vs V, U=10 for Algorithm~\ref{Algo:Size} and U=1 for Algorithm~\ref{Algo:UK-algo-1}.}
			\label{fig:Varying-V}
		\end{center}
	\end{figure}
	\par In Figure~\ref{fig:Varying-V}, as $V$ increases the mean server running cost decreases for Algorithm~\ref{Algo:Size} and Algorithm~\ref{Algo:UK-algo-1}. We observe that Algorithm~\ref{Algo:UK-algo-1} achieves optimum server running cost for the lower value of $V$ compared to Algorithm~\ref{Algo:Size}. Since Algorithm~\ref{Algo:UK-algo-1} uses a weight function $g(x)=\log(1+x)$ to compute job service configuration, and more jobs accumulate at queues for smaller value of $ V $. This leads to complete utilization of servers and it is able to achieve optimum mean server running cost for smaller values of $V$. Preemptive is a work conserving policy, i.e., if queues are non-empty it switches on the servers irrespective of the number of jobs present. This leads to underutilization of servers and more mean server running cost. Further Non-Preemptive also under-utilizes servers for significant portions of time, leading to  a higher mean server running cost. 
		\begin{figure}[h!]
		\begin{center}
			\includegraphics[width=2.5in,height=1.9in]{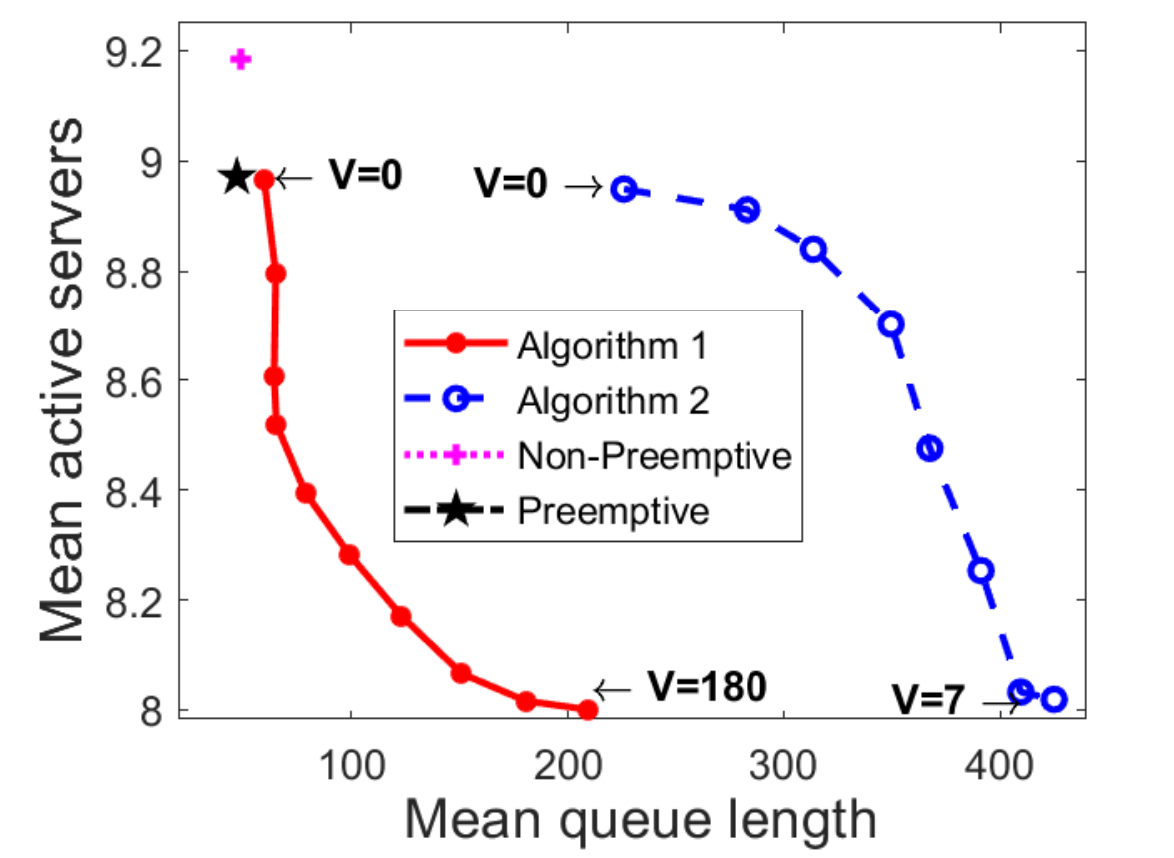}
			\caption{Mean Server Running Cost Vs Mean Queue Length, U=10 for Algorithm~\ref{Algo:Size} and U=1 for Algorithm~\ref{Algo:UK-algo-1}.}
			\label{fig:QvsV}
		\end{center}
	\end{figure}
	
\par In Figure~\ref{fig:QvsV}, as $ V $ increases the mean queue lengths increase whereas mean server running costs decrease. Expectedly, Algorithm \ref{Algo:Size} achieves a better queue length-cost tradeoff than Algorithm \ref{Algo:UK-algo-1}. Smaller values of $ V $ Algorithm~\ref{Algo:Size} performs almost identically to Preemptive, and both substantially outperform Nonpreemptive cost wise. Algorithm~\ref{Algo:UK-algo-1} has inferior mean queue length performance than Preemptive and Non-Preemptive but may lead to substantial savings on the mean server running cost. The latter depends on the choice of $ V $.	

\par In Figures~\ref{fig:Varying-U} and \ref{fig:QvsU}, we illustrate the impact of varying the weighing parameter $U$ on mean migration cost and mean queue lengths while keeping $V$ fixed. We set $V = 20$ and $V=6$ for Algorithm~\ref{Algo:Size} and Algorithm~\ref{Algo:UK-algo-1} respectively. 

\begin{figure}[h!]
	\begin{center}
        \includegraphics[width=3.65in,height=1.65in]{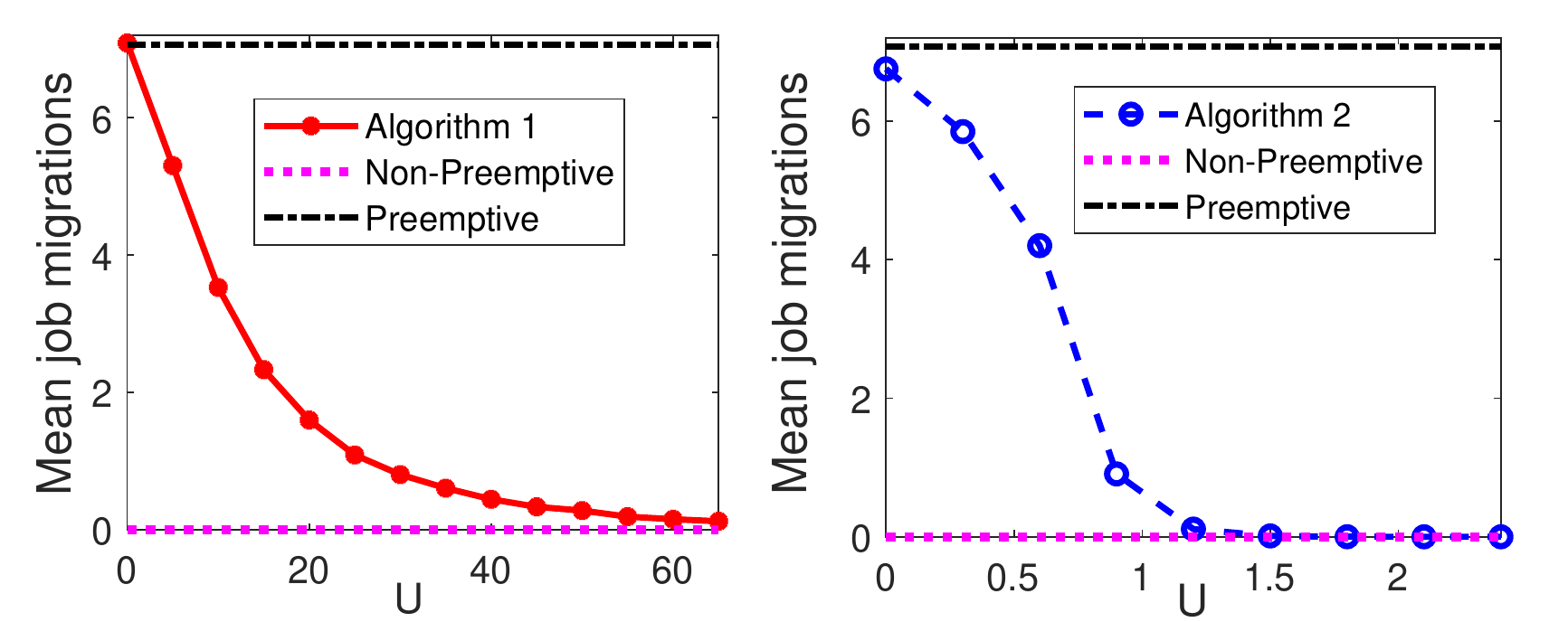}
		\caption{ Mean Job Migration Cost Vs U, V=20 for Algorithm~\ref{Algo:Size} and V=6 for Algorithm~\ref{Algo:UK-algo-1}.}
		\label{fig:Varying-U}
	\end{center}
\end{figure}
\par In Figure~\ref{fig:Varying-U} we observe that the mean migration cost decreases with the increase in $U$ for Algorithm~\ref{Algo:Size} and Algorithm~\ref{Algo:UK-algo-1}. We observe that Algorithm~\ref{Algo:UK-algo-1} achieves zero mean migration cost for smaller values of $ U $ than Algorithm \ref{Algo:Size}. Non-Preemptive computes VM configuration at every time slot $t$ such that no job migration takes place. This leads to zero migration cost. Preemptive computes job schedule at every time slot $ t $ independent of the previous job schedule. This leads to inferior mean migration cost performance compared to all other algorithms.
\begin{figure}[h!]
	\begin{center}
		\includegraphics[width=2.5in,height=2in]{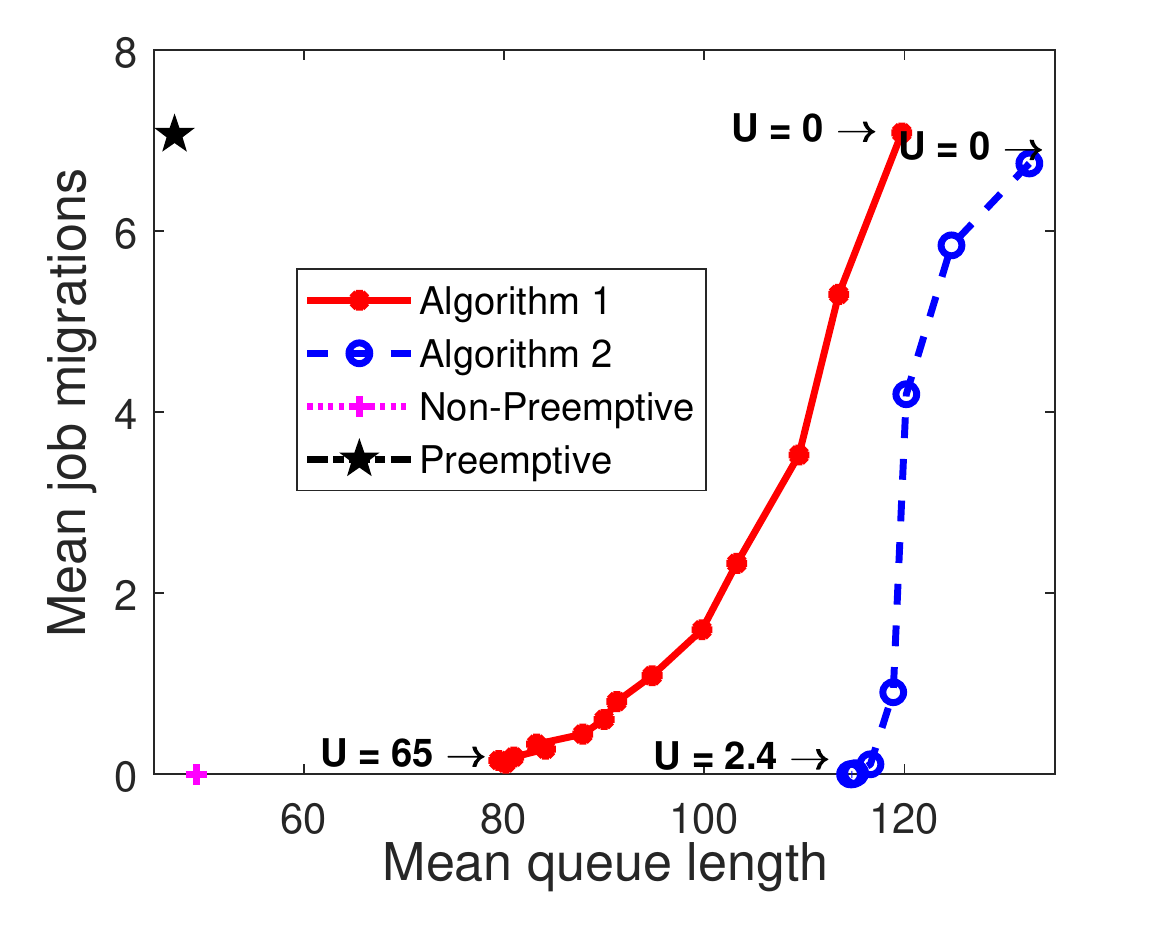}
		\caption{Mean Job Migration Cost Ons Vs Mean Queue Length,V=20 for Algorithm~\ref{Algo:Size} and V=6 for Algorithm~\ref{Algo:UK-algo-1}.}
		\label{fig:QvsU}
	\end{center}
\end{figure}
\par In Figure~\ref{fig:QvsU}, mean queue lengths decrease as $U$ increases. At larger values of $U$ job migrations are discouraged leading to switching on of more servers. This mitigates the effect of $V$. As more servers are available more jobs are served to lead to less mean queue length. Since Preemptive is work conserving policy and chooses the best VM configuration irrespective of the previous configuration has low mean queue length and more mean job migration cost. Both Algorithm~\ref{Algo:Size} and Algorithm~\ref{Algo:UK-algo-1} substantially outperform Preemptive cost wise.	

To study the performance of Q-BMW and Algorithm~\ref{Algo:Mig-Delay} we consider a simple setup with ten identical servers, $L=10$. Each server can host two types of VMs, $M=2$. The possible maximal VM configurations are  $(0,1),(3,0)$.
\par In Figure~\ref{fig:mig-delay-q-len} we demonstrate the throughput optimality of Q-BMW and Algorithm~\ref{Algo:Mig-Delay}. We observe that both are throughput optimal and mean queue length increase with $\alpha$. 
\begin{figure}[h!]
	\begin{center}
		\includegraphics[width=2.3in,height=1.85in]{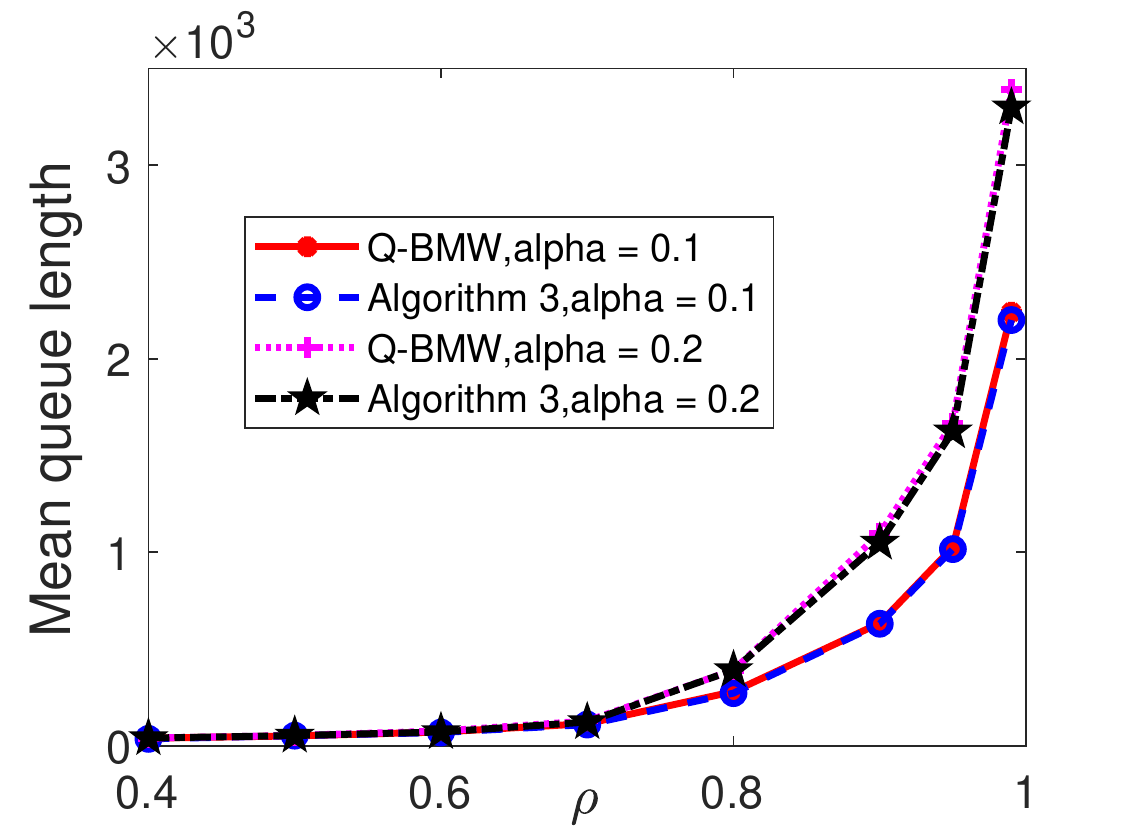}
		\caption{Illustrating throughput optimality of Q-BMW and Algorithm~\ref{Algo:Mig-Delay} for $\alpha=\{0.1, 0.2\}$, and $V=10$}
		\label{fig:mig-delay-q-len}
	\end{center}
\end{figure}
In Figure~\ref{fig:mig-delay-switchins} we illustrate the impact of the weighing parameter $V$ on the mean number of active servers, mean queue length, and mean job migrations for Q-BMW and Algorithm~\ref{Algo:Mig-Delay}. We consider $\rho = 0.8$ and vary the weighing parameter $V$. For $\rho = 0.8$ and $L=10$ the $ C_{opt}^1(\lambda) =8$. In Figure~\ref{fig:mig-delay-switchins}(a) we observe that as $V$ increases the mean  number of active servers approach the optimal value for both the algorithms. In Figure~\ref{fig:mig-delay-switchins}(b), (c) we observe that the mean queue length and mean job migrations increase with the increase in $V$ and the same explanation as for Figure~\ref{fig:Affine_src_V_U10} holds here.

\begin{figure}[h!]
	\begin{center}
		\includegraphics[width=3.5in,height=1.2in]{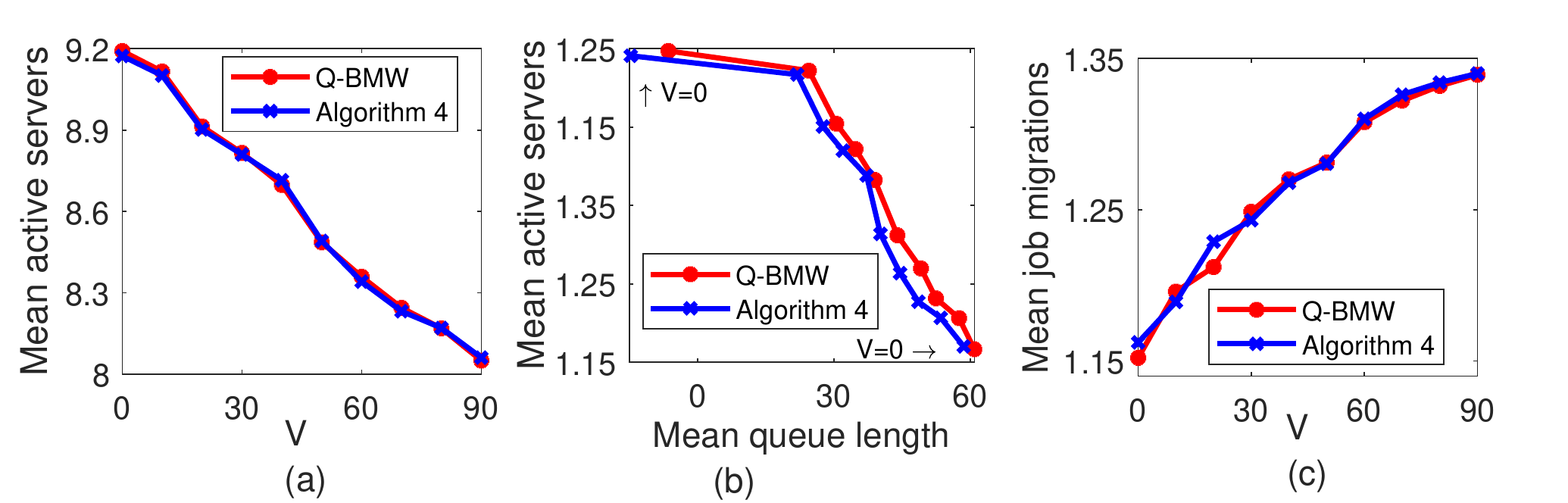}
		\caption{Illustrating the cost optimality of Q-BMW and Algorithm~\ref{Algo:Mig-Delay} for $\alpha = 0.1, \rho = 0.8$}
		\label{fig:mig-delay-switchins}
	\end{center}
\end{figure}
In Figure~\ref{fig:alpha-mig-delay-switchins} we illustrate the impact of  $\alpha$ on mean queue length, the mean number of active servers, and mean job migrations for Q-BMW and Algorithm~\ref{Algo:Mig-Delay}. We consider $\rho = 0.8$ and $V=10$. In Figure~\ref{fig:mig-delay-switchins}(a) we observe that the mean queue length increases with the increase in $\alpha$. The reason is the job migrations increase with $\alpha$ and thus we observe an increase in queue length~(see Remark~\ref{remark:vary-alpha}). In Figure~\ref{fig:mig-delay-switchins}(b) we observe that the mean number of active servers decrease with the increase $\alpha$ and the same explanation as for Figure~\ref{fig:Affine_jmc_U_V5} holds here.

\begin{figure}[h!]
	\begin{center}
		\includegraphics[width=3.5in,height=1.2in]{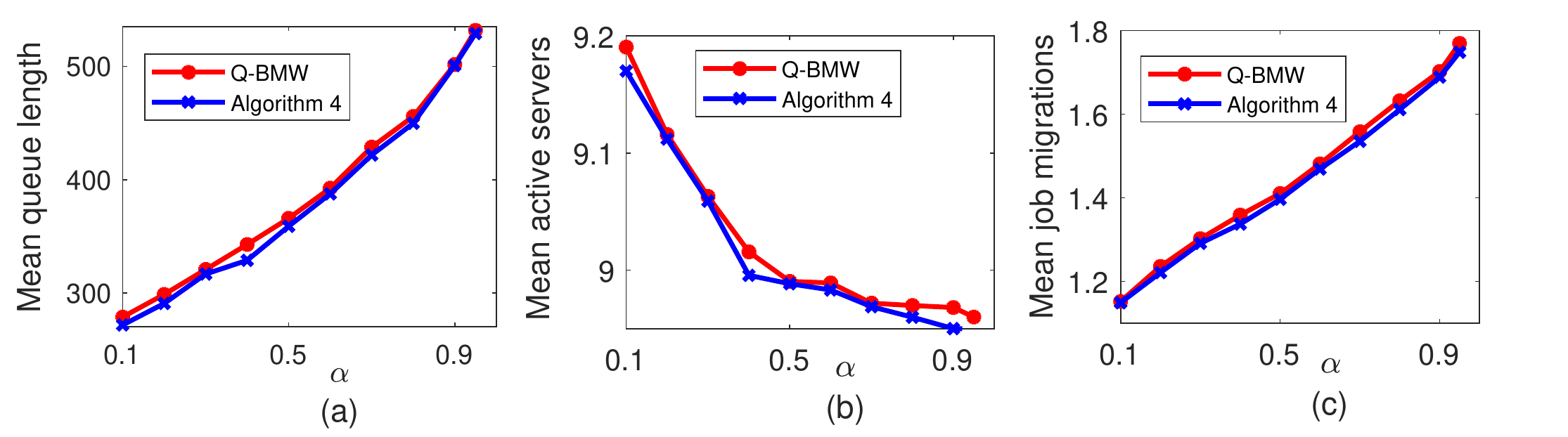}
		\caption{Illustrating the cost optimality of Q-BMW and Algorithm~\ref{Algo:Mig-Delay} for $V = 10, \rho = 0.8$}
		\label{fig:alpha-mig-delay-switchins}
	\end{center}
\end{figure}

\section{Conclusion} \label{DCN-sec:conclusion}
	We study the problem of job scheduling and migration in cloud computing clusters, with the objective of
	minimizing server running and job migration costs while rendering the job queues stable. Our approach is based on
	drift plus penalty framework. In the online migration model, We propose two job scheduling
	algorithms, the first one requiring job size information and the second one not requiring it, both incurring
	average costs arbitrarily close to the optimal cost. In offline migration, we present two throughput and cost optimal algorithms. 
	
\bibliography{DCN}		
	\appendices
	\section{Proof of lemma \ref{Lemma:Markov-The-Best}} \label{Proof:Markovian the Best}
	Let us consider an arbitrary scheduling policy,
	prescribing a sequence of VM configurations $N(t), t \geq 1$.\footnote{Proof of idea behind this theorem is similar to the Theorem 1 from \cite{Neely-Energy-Opt}. But we do not use Caratheodory theorem since the number of VM configurations possible and number of job migrations that can happen are finite in number.}
	Let us fix a $T > 0$.
	For all $l \in [L]$ and $N^l \in {\cal N}^l$, define
	$\tau^l(N^l;T) \subset [T]$ to be the set of time slots that see schedule $N^l$ as the previous configuration
	at server $l$.
	\[\tau^l(N^l;T) = \{t \in [T]: N^l(t-1) = N^l\}.\]
	Similarly, define $\P^l(N'^l,N^l;T) $ to be the fraction of slots with configuration $N$, conditioned on previous slot configuration being $N'^l$
	and $\pi^l(N^l;T)$ to be the~(unconditioned)fraction of slots with configuration $N$.
	\begin{align*}
	P^l(N'^l,N^l;T) &= \frac{1}{T}\sum_{t \in \tau^l(N'^l;T)} \mathbbm{1}_{\{N^l(t) = N^l\}},\\
	\pi^l(N^l;T) &= \frac{1}{T}\sum_{t \in [T]} \mathbbm{1}_{\{N^l(t) = N^l\}}.
	\end{align*}
	Observe that
	\[
	\pi^l(N^l;T) = \sum_{N'^l \in {\cal N}^l} \pi^l(N'^l;T)P^l(N'^l,N^l;T).
	\]
	Further, define empirical average service rate $\bar{\mu}^l(T)$, empirical average server running cost $\bar{C}_1^l(T)$, empirical average
	migration cost $\bar{C}_2^l(T)$ and empirical weighted average cost as below.
	\begin{align*}%\label{DCN-eq:average service}
	\bar{\mu}^l_{m,s}(T) &= \sum_{N^l}\pi^l(N^l;T)(N^l_{m,s}-N^l_{m,s+1}), \ \forall m,s,\\
	\bar{C}_1^l(T) =& \sum_{N^l}\pi^l(N^l;T) \mathbbm{1}_{\{\sum_{m,s}N^l_{m,s} > 0\}}, \\
	\bar{C}_2^l(T) =& \sum_{N^l}\pi^l(N^l;T)\sum_{N'^l}P^l(N'^l,N^l;T) \\&\sum_{m,s\geq2}(N'^l_{m,s}-N^l_{m,s-1})^+ \\
	\bar{C}^l(T) = &V\bar{C}_1^l(T) + U\bar{C}_2^l(T).
	\end{align*}
	Since we have finitely many sequences, we can find a subsequence of $T_r \in \mathbb{Z}_+$ such that
	the following limits hold.
	\begin{align*}
	\lim_{r \to \infty} \mu^l_{m,s}(T_r) &=  \mu^l_{m,s}, \forall l,m,s,\\
	\lim_{r \to \infty} \pi^l(N^l;T_r) &=  \pi^l(N^l), \forall l, N^l \in {\cal N}^l,\\
	\lim_{r \to \infty} P^l(N'^l,N^l;T_r) &=  P^l_{N'^l,N^l}, \forall l, N'^l,N^l \in {\cal N}^l,\\
	\lim_{r \to \infty} \bar{C}^l(T_r) &=  \bar{C}^l, \forall l.\\
	\end{align*}
	Clearly, for all $l$,
	\begin{align*}
	\bar{\mu}^l_{m,s} = &\sum_{N^l}\pi^l(N^l)(N^l_{m,s}-N^l_{m,s+1}), \ \forall m,s,\\
	\bar{C}^l = &U\sum_{N^l}\pi^l(N^l)\sum_{N'^l}P^l_{N'^l,N^l}\sum_{m,s\geq2}(N'^l_{m,s}-N^l_{m,s-1})^+ \\
	&+ V\sum_{N^l}\pi^l(N^l) \mathbbm{1}_{\{\sum_{m,s}N^l_{m,s} > 0\}}.
	\end{align*}
	Notice that $P^l \in {\cal S}^l, \pi^l \in {\cal P}^l$
	and $\pi^l P^l =  \pi^l$. Also, stability implies that there exist $(\lambda^l)_l$
	such that $\lambda  = \sum_l \lambda^l$
	and
	\[
	\lambda_{m,s}^l \leq \sum_{N^l \in {\cal N}^l}\pi^l(N^l)(N^l_{m,s}-N^l_{m,s+1}).
	\]
	Since $C_{\text{opt}}(\lambda)$ is the optimal value of over all
	such $(\lambda^l)_l, P^l$ and $\pi^l$, $\bar{C} = \sum_l \bar{C}^l  \geq C_{\text{opt}}(\lambda)$.

	\section{Proof of Theorem \ref{Theorem:Size-Algo-Optimality}} \label{Proof:Size-Algo-Optimality}
We begin with introducing some notation. Let $W_m(t)$ be the workload of type-$m$ that arrives at time $t$~$(W_m(t) = \sum_ssA_{m,s}(t))$ and $J_m(t) = \sum_s sQ_{m,s}(t)$ be the workload backlog of type-$m$ jobs at time $t$. Further, let $N^l_m(t)$ be the number of type-$m$ VMs served at time $t$;$ N^l_m(t) = \sum_s N^l_m(t)$. The workload of type-$m$ evolves as
	\begin{align*}
	J_m(t+1) = J_m(t)-\sum_l N^l_m(t)+W_m(t).
	\end{align*}
Squaring on both sides of above equation and rearranging the
terms
	\begin{align*}%\label{DCN-eq:quad-drift}
		&(J_m(t+1))^2-(J_m(t))^2 \\
		=&\Big(W_m(t)-\sum_l N^l_m(t)\Big)^2+2J_m(t)\Big(W_m(t)-\sum_l N^l_m(t)\Big)\\
	\leq &\Big(\sum_l N^l_m(t)\Big)^2+\Big(W_m(t)\Big)^2\\
	&+2J_m(t)(W_m(t)-\sum_l N^l_m(t))	\\
	\leq &  2B_1+2J_m(t)\Big(W_m(t)-\sum_l N^l_m(t)\Big)
	\end{align*}
	where $ B_1 = \frac{1}{2}\Big((LN_{\max})^2+S^2A^2_{\max}\Big) $. Let us define $L(t) = \frac{1}{2}\sum_m(J_m(t))^2$ and $\Delta(t)=L(t+1)-L(t)$. The above inequality can be written as
	\begin{align} \label{DCN-eq:delta(t)}
	\Delta(t) \leq MB_1+\sum_m J_m(t)W_m(t)-\sum_m\sum_l J_m(t)N^l_m(t).
	\end{align}
	Let us now consider last term in \eqref{DCN-eq:delta(t)},
	\begin{align} \label{DCN-eq:N-bar}
	\sum_{m,l}& J_m(t)N^l_m(t) \nonumber\\
	=&\sum_{m,l} J_m(t)N^l_m(t)\mathbbm{1}_{\{J_m(t)< LN_{\max}S\}} \nonumber \\
	  &+\sum_{m,l} J_m(t)N^l_m(t)\mathbbm{1}_{\{J_m(t)\geq LN_{\max}S\}} \nonumber \\	 
	\geq  &\sum_{m,l} J_m(t)(\bar{N}^l_m(t)-N_{\max})\mathbbm{1}_{\{J_m(t) < LN_{\max}S\}} \nonumber\\&+\sum_{m,l} J_m(t)\bar{N}^l_m(t)\mathbbm{1}_{\{J_m(t)\geq LN_{\max}S\}} \nonumber\\
	\geq  & \sum_{m,l} J_m(t)\bar{N}^l_m(t)-MLN^2_{\max}S
	\end{align}
	The second last inequality follows since, whenever $J_m(t) \geq LN_{\max}S$, Step-$1$ in Algorithm ensures that $\sum_s\bar{N}^l_{m,s}=\sum_sN^l_{m,s}$ i.e., $\bar{N}^l_m(t)=N^l_m(t)$.
	Using \eqref{DCN-eq:N-bar} in \eqref{DCN-eq:delta(t)},
	\begin{align*}
	\Delta(t) \leq B_2 + \sum_m J_m(t)W_m(t)-\sum_{m,l} J_m(t)\bar{N}^l_m(t)
	\end{align*}
	where $B_2 = MB_1+LMSN^2_{\max}$.
	We now add server running and job migration costs on both the sides of the above inequality and take conditional expectation, conditioned on $Y(t) := (J(t), N(t-1))$.
	\begin{align}\label{DCN-eq:drift+cost_1}
	\E&\Bigg[\Delta(t)+\sum_l(VC^l_1(\bar{N}^l(t))+UC^l_2(N^l(t-1),\bar{N}^l(t)))|Y(t)\Bigg] \nonumber\\
	\leq& B_2+\sum_{m,s}J_m(t)s\lambda_{m,s}-\Bigg[\sum_{m,l} J_m(t)\bar{N}^l_m(t)\nonumber \\
	&-\sum_l(VC^l_1(\bar{N}^l(t))+UC^l_2(N^l(t-1),\bar{N}^l(t)))\Bigg].	
	\end{align}
	In order to bound the job migration cost of Algorithm~\ref{Algo:Size} we consider another algorithm that accounts for only the server running cost. In particular, it yields a configuration 
	\begin{align*}
		\eta^l(t) \in \argmax_{N \in \mathcal{N}^l}\sum_mJ_m(t)N_m-VC^l_1(N^l(t))
	\end{align*}
	for server $l$ at time $t$. From the definition of $\bar{N}^l_m(t)$ (see Step~$1$ in Algorithm~\ref{Algo:Size}),
	\begin{align}\label{DCN-eq:bar-star}
	\Bigg(\sum_m&J_m(t)\bar{N}^l_m(t)-VC^l_1(\bar{N}^l(t))\nonumber \\&-UC^l_2(N^l(t-1),\bar{N}^l(t))\Bigg)\geq \Bigg(\sum_mJ_m(t)\eta^l_m(t) \nonumber\\&-U\sum_{m,s=2}C_2^l(N^l,\eta^l(t))-VC_1^l(\eta^l(t))\Bigg).
	\end{align}
Using this inequality in \eqref{DCN-eq:drift+cost_1} and also noticing 
that $\sum_{m,s =2} N^l_{m,s} \leq N_{\max}$,
	\begin{align}\label{DCN-eq:drift+cost_2}
	\E&\Bigg[\Delta(t)+\sum_l(VC^l_1(\bar{N}^l(t))+UC^l_2(N^l(t-1),\bar{N}^l(t)))|Y(t)\Bigg] \nonumber\\
	\leq& B_2+ULN_{\max} +\sum_{m,s}J_m(t)s\lambda_{m,s}\nonumber \\
	&-\Bigg[\sum_{m,l} J_m(t)\eta^l_m(t)-V\sum_l C_1^l(\eta^l(t))  \Bigg].
	\end{align}
	Let an optimal solution of Problem~\eqref{DCN-eq:Opt-2}, for arrival rate vector $\lambda +\epsilon$, be attained at $\bar{\pi} \in \cal{P}$. Using the definition of $\eta^l(t)$ in \eqref{DCN-eq:drift+cost_2}, we can write
    \begin{align}\label{DCN-eq:drift+cost_3}
    \E&\Bigg[\Delta(t)+\sum_l(VC^l_1(\bar{N}^l(t))+UC^l_2(N^l(t-1),\bar{N}^l(t)))|Y(t)\Bigg] \nonumber\\
    \leq& B_2+ULN_{\max} +\sum_{m,s}J_m(t)s\lambda_{m,s}-\Bigg[\sum_{m,l} J_m(t)\sum_{N \in \mathcal{N}^l}\bar{\pi}^l_{N^l}N^l_m\nonumber \\&-V\sum_l\Big(\sum_{N \in \mathcal{N}^l}\bar{\pi}^l_{N^l}\sum_mc_mN_m-c_0(1-\pi_0^l)\Big)  \nonumber \Bigg] \nonumber \\
    \leq & B_2+ULN_{\max} +\sum_{m,s}J_m(t)s\lambda_{m,s} \nonumber \\ &-\sum_mJ_m(t)\Big(\sum_ss\lambda_{m,s}+\e\Big)+V\bar{C}_{\text{opt}}(\lambda+\e) \nonumber \\
    \leq & B_2+ULN_{\max} -\e\sum_mJ_m(t)+ V\bar{C}_{\text{opt}}(\lambda+\e)
    \end{align}
    where the second last inequality follows because $\bar{C}_{\text{opt}}(\lambda+\epsilon)$ is an upper bound on the server running cost incurred by $\bar{\pi}$. Taking expectation with respect to $N(t-1)$,
	\begin{align*}%\label{DCN-eq:drift+cost_4}
	\E&\Bigg[\Delta(t)+\sum_l(VC^l_1(\bar{N}^l(t))+UC^l_2(N^l(t-1),\bar{N}^l(t)))|J(t)\Bigg] \nonumber\\
	\leq & B_2+ULN_{\max} -\e\sum_mJ_m(t)+ V\bar{C}_{\text{opt}}(\lambda+\e)
	\end{align*}
	Summing over $t=0 \dots T-1$ and taking expectation,
	\begin{align}\label{DCN-eq:drift+cost_5}
	\E&\Bigg[L(T)-L(0) \nonumber \\
	&+\sum_{t=0}^{T-1}
	\Big[\sum_l(VC^l_1(\bar{N}^l(t))+UC^l_2(N^l(t-1),\bar{N}^l(t)))\Big]\Bigg] \nonumber\\
	\leq & TB_2+TULN_{\max} -\e\sum_t\sum_m\E[J_m(t)]+ TV\bar{C}_{\text{opt}}(\lambda+\e).
	\end{align}	
	Following standard steps of drift plus penalty technique we obtain (see\cite{Drift-plus-Penalty}),
	\begin{align*}
	\lim_{T \to \infty} \frac{1}{T}\sum_t^T\sum_m\E[J_m(t)] \leq \frac{B_2+ULN_{\max}+V\bar{C}_{\text{opt}}(\lambda+\e)}{\e},
	\end{align*}
	\begin{align*}
	\lim_{T \to \infty} \frac{1}{T}\sum_t^T\sum_{m,l}\E[C^l_1(\bar{N}^l(t))]  
	\leq  \frac{B_2+ULN_{\max}}{V}+\bar{C}_{\text{opt}}(\lambda+\e).
	\end{align*}
Since $\epsilon > 0$ can be chosen to be arbitrarily small and $\bar{C}_{\text{opt}}(\cdot)$ is a continuous function~(see \cite[Lemma~1]{subhashini2017augment},) we can replace $\bar{C}_{\text{opt}}(\lambda+\epsilon)$ with $\bar{C}_{\text{opt}}(\lambda)$ in the above inequalities. We thus establish parts (1) and (2) of the theorem. 	
	
\par To prove part~$(c)$ of theorem, at every $t$, we consider a job service configuration $\bar{\eta}(t)$ that leads to zero number of migrations. More specifically,
	\begin{align*}
	\bar{\eta}(t)^l_{m,s}(t) &= \bar{N}^l_{m,s+1}(t-1), l \in [L], m\in [M], s \in [S-1], \\
	\bar{\eta}(t)^l_{m,S}(t) &= \bar{N}^l_{m,1}(t-1), l \in [L], m\in [M],\\
	\bar{\eta}(t)^l_m(t) &= \bar{N}_m(t-1).
	\end{align*} 
	Clearly $\bar{\eta}(t)$ do not cause any job migrations.
	From the definition of $\bar{N}(t)$ we can write,
	\begin{align*}
	\sum_m &J_m(t)\bar{\eta}^l(t)_m(t)-VC_1^l(\bar{\eta}^l(t))\leq \sum_mJ_m(t)\bar{N}^l_m(t) \\ 	-&VC_1^l(\bar{N}^l(t)) -U\sum_{m,s=2}\Big(N^l_{m,s}(t-1)-\bar{N}^l_{m,s-1}(t)\Big)^+.
	\end{align*}
	Rearranging the terms
	\begin{align*}
	U&\sum_{m,s=2}\Big(N^l_{m,s}(t-1)-\bar{N}^l_{m,s-1}(t)\Big)^+ \\
	\leq& V\big[C_1^l(\bar{\eta}^l(t))-C_1^l(\bar{N}^l(t))\big]+\sum_mJ_m(t)\bar{N}^l_m(t)\\
	&-\sum_mJ_m(t)\bar{\eta}(t)^l_m(t).
	\end{align*}	
	Summing over $t=1 \dots T$,
	\begin{align*}
	U\sum_{t=1}^T&\sum_{m,s=2}\Big(N^l_{m,s}(t-1)-\bar{N}^l_{m,s-1}(t)\Big)^+ \\
	\leq&\sum_{t=1}^{T}V\big[C_1^l(\bar{\eta}^l(t))-C_1^l(\bar{N}^l(t))\big] \\
	&+\sum_{t=1}^{T}\sum_mJ_m(t)\bar{N}^l_m(t)-\sum_{t=1}^{T}\sum_mJ_m(t)\bar{\eta}(t)^l_m(t)\\
	\leq& \sum_{t=1}^{T}V\big[C_1^l(\bar{N}^l(t-1))-C_1^l(\bar{N}^l(t))\big]\\
	&+\sum_{t=1}^{T-1}\sum_mJ_m(t)\bar{N}^l_m(t)-\sum_{t=2}^{T}\sum_mJ_m(t)\bar{N}^l_m(t-1) \\
	&+TMSA_{\max}N_{\max} \\
	\leq & TMSA_{\max}N_{\max} +\sum_{t=1}^{T-1}V\big[C_1^l(\bar{N}^l(t))-C_1^l(\bar{N}^l(t))\big] \\
	&+\sum_m\sum_{t=1}^{T-1}\Big[J_m(t)\bar{N}^l_m(t)-J_m(t)\bar{N}^l_m(t-1)\Big]\\
	\leq & TMSA_{\max}N_{\max} +\sum_m\sum_{t=1}^{T-1}\bar{N}^l_m(t)\Big[J_m(t)-J_m(t+1)\Big]\\
	\leq & TMSA_{\max}N_{\max} +\sum_m\sum_{t=1}^{T-1}\bar{N}^l_m(t)N_{\max}\\
	\leq & TMSA_{\max}N_{\max} +MTN^2_{\max}
ss	\end{align*}
	In the second inequality we used $\bar{\eta}^l(t) = \bar{N}^l(t-1)$ and the maximum workload till time slot $T$ is upper bounded by $TSA_{\max}$ for all job types. The second last inequality follows because $J_m(t)-J_m(t-1) \leq LN_{\max}$.	
	The average number of migrations is given by at each server is bounded by
	\begin{align*}
	\frac{1}{T}\sum_{t=1}^T\sum_{m,s=2}&\E\Big[\Big(N^l_{m,s}(t-1)-\bar{N}^l_{m,s-1}(t)\Big)^+\Big] \nonumber \\
	&\leq \dfrac{MN_{\max}(SA_{\max}+N_{\max}) }{U}.
	\end{align*}

\section{Pseudo code for Algorithm $1$}\label{Appenidix:Pseudocode}
Given $U,V,L, Q(t)$ and server's maximal VM configurations Algorithm \ref{Algo:Size} computes the job configuration using integer linear programming \eqref{DCN-eq:drift+penalty-control-action}. Once the job configurations for all the servers is computed the actual jobs that are placed on the servers is decided by the number of jobs waiting in the queues and previous job configurations on the servers. Accordingly queue lengths and previous job configurations are updated for the next slot. Note that, the job of type $(m,s)$ placed on the server $l$ at slot $t-1$ will be job type $(m,s-1)$ at slot $t$. So, in computing previous job configuration at slot $t$, using the job service configurations on the servers at slot $t-1$ we have to decrease the job sizes by one.
\begin{algorithm}
\caption{Pseudo code for Algorithm $1$ }
\label{Pseudo code}
\begin{algorithmic}[1]
    \State Given system parameters $V,U,L,Lambda,T$
    \State System Initialization
        \begin{align*}
            PrevConfig(l)& =0,\forall l\in [L]\\
            CurConfig(l) &=0, \forall l\in [L]\\
             Q(0) = 0,SR-Cost &= 0,JMC=0\\
        \end{align*}

        \While{$t<T$}  
        \State $Q(t) \leftarrow Q(t-1)+Arrivals(t-1)$  
        \State $PrevConfig(l) \leftarrow CurConfig(l), \forall l\in [L]$
        \State $CurConfig(l) \leftarrow from Algorithm 1,\forall l\in [L]$
        \State update $Q(t)$ according to the new job sizes
        \EndWhile
   	\end{algorithmic}
\end{algorithm}
\section{Proof of Theorem \ref{Theorem:UK-algo-1-optimality}} 
\label{Proof:Age-Algo-Optimality}
We combine the ideas in the proofs of \cite[Section V]{stheja-unknown} and Theorem \ref{Theorem:Size-Algo-Optimality} in this work. Let $\tilde{Q}_m(t)$ be the total type-$m$ jobs of all ages. Mathematically, $\tilde{Q}_m(t)=\sum_a\tilde{Q}_{m,a}, \forall m \in [M]$. For each $m$ let $S_m$ be a random variable representing size of a type-$m$ job. Let $W_m(a)$ be the expected remaining service time of a type-$m$ job given that it has been served for $a$ time slots. Mathematically, $W_m(a)=\E[S_m-a|S_m>a]$. We denote the expected workload backlog of type-$m$ jobs by $\tilde{J}_m(t)$. Thus 
\[\tilde{J}_m(t)=\sum_a\tilde{Q}_{m,a}W_m(a), \forall m \in [M].\]
The expected workload backlog evolves as \[\tilde{J}(t+1) = \tilde{J}_m(t)+\tilde{A}_m(t)-\sum_lD^l_m(t),\] where $\tilde{A}_m(t) = A_m(t)\bar{S}_m$ since each arrival of type-$m$ brings average workload $\bar{S}_m$. Moreover, $D^l_m(t)$ expected workload that has departed at time $t$, at server $l$ is computed as follows. Given state $(\tilde{J}(t), \tilde{A}(t), D(t))$, $\tilde{J}(t+1)$ is independent of previous states. So, $\{\tilde{J}(t)\}$ is a Markov chain. We investigate it's stability in the following.

\par Let $\tilde{p}_{m,a}=\P(S_m=a+1|S_m>a)$. A type-$m$ job that is scheduled for $a$ time slots has a workload backlog of $W_m(a)$. It departs in the next slot with probability $\tilde{p}_{m,a}$ and does not depart with probability $1-\tilde{p}_{m,a}$. So the departed workload for every a type-$m$ job $j$ that is being served for $a$ slots can be written as
\begin{align}\label{DCN-eq:dep_workload}
 \mathcal{D} ^j_m(t) = 
\begin{cases}
W_m(a) \text{  with prob } \hat{p}_{m,a}\\  
W_m(a)-W_m(a+1) \text{  with prob } 1-\hat{p}_{m,a}.
\end{cases}
\end{align} 
Moreover, $D^l_m(t) = \sum_{j=1}^{N^l_m(t)}\mathcal{D} ^j_m(t)$. From \eqref{DCN-eq:dep_workload} we see that $D^l_m(t)$ can be positive or negative. However, since  job sizes are bounded by $S$, $W_m(a) \leq S$ and $D^l_m(t \in [-N_{\max}S, N_{\max}S]$. Also, $\tilde{A}_m(t) \leq A_{\max}S$. Combining these we get 
\begin{align}\label{DCN-eq:ub-deltaJ}
\tilde{J}_m(t+1)-\tilde{J}_m(t) \leq A_{\max}S+N_{\max}S,
\end{align}
\begin{align}\label{DCN-eq:lb-deltaJ}
\tilde{J}_m(t+1)-\tilde{J}_m(t) \geq -N_{\max}S.
\end{align}
Since every job in queue has at least one more time slot of service remaining, $\tilde{J}_m(t) \geq \tilde{Q}_m(t)$. Further, $\tilde{J}_m(t) \leq S\tilde{Q}_m(t)$.
The following result is shown in \cite[Lemma 3]{stheja-unknown}.
	\par \textit{Fact:} If a type-$m$ has been scheduled for $a$ time slots, then the expected departure in the workload backlog is $\E[\mathcal{D}_m|l] = 1$. Therefore, we have $\E[\mathcal{D}_m] = 1$.
\par Let us define $g: [0,\infty) \to [0,\infty)$ as $g(x) = \log(1+x)$ and $G: [0,\infty) \to [0,\infty)$ as 
\[G(x) = \int_{0}^{x}g(y)dy.\]
We use the following Lyapunov function to establish stability of the expected workload backlog
$V(\tilde{J}(t)) = \sum_m G(\tilde{J}_m(t)).$
Define $\Delta(t)=V(\tilde{J}(t+1)-\tilde{J}(t))$. Then
\begin{align}\label{DCN-eq:delta-V1}
\Delta(t)& 
=\sum_m \Big(G(\tilde{J}_m(t+1))-G(\tilde{J}_m(t))\Big)  \nonumber \\
\leq &\sum_m \Big(\tilde{J}_m(t+1)-\tilde{J}_m(t))g(\tilde{J}_m(t+1)) \nonumber\\
= &\sum_m \Big(\tilde{J}_m(t+1)-\tilde{J}_m(t))\Big(g(\tilde{J}_m(t+1))-g(\tilde{J}_m(t))\Big)\nonumber \\
&+\sum_m\Big(\tilde{A}_m(t)-\sum_m D^l_m(t)\Big)g(\tilde{J}_m(t))
\end{align}
where the first inequality is due to convexity of $G(\cdot)$. Since $g'(.) \leq 1$, 
$|g(\tilde{J}_m(t+1)-g(\tilde{J}_m(t)| \leq |\tilde{J}_m(t+1)-\tilde{J}_m(t+1)|$.
 Thus the first term in \eqref{DCN-eq:delta-V1} can be bounded as
\begin{align*}
\sum_m& \Big(\tilde{J}_m(t+1)-\tilde{J}_m(t)\Big)\Big(g(\tilde{J}_m(t+1))-g(\tilde{J}_m(t))\Big) \\
&\leq\sum_m \Big|\tilde{J}_m(t+1)-\tilde{J}_m(t)\Big|\Big|g(\tilde{J}_m(t+1))-g(\tilde{J}_m(t))\Big| \\
&\leq \sum_m \Big|\tilde{J}_m(t+1)-\tilde{J}_m(t)\Big|^2 \leq K_1
\end{align*}
where $K_1=(MA_{\max}S+N_{\max}S)^2$ from \eqref{DCN-eq:ub-deltaJ}.
 Define $Y(t) = \big(\tilde{Q}(t),N(t-1),Z(t-1)\Big)$. Now taking conditional expectation, conditioned on $Y(t)$ of \eqref{DCN-eq:delta-V1} and using above we can write
\begin{align}\label{DCN-eq:cond-drift1}
\E\Big[&\Delta(t)|Y(t)\Big] \nonumber \\
\leq& K_1+\sum_mg(\tilde{J}_m(t))\lambda_m\bar{S}_m-\E\sum_{m,l}\Big[g(\tilde{J}_m(t))D^l_m(t)|Y(t)\Big].
\end{align}
Now we will bound the last term in \eqref{DCN-eq:cond-drift1}. Though Step $1$ in Algorithm~\ref{Algo:UK-algo-1} gives the VM configuration $\tilde{N}^l(t) l \in [L]$, there may be unused service when corresponding queue length is small. We will first bound this unused service. 
\par Define a fictitious departure process to account for the unused service as follows:
\begin{align}\label{DCN-eq:fict_dep_workload1}
 \tilde{\mathcal{D}}^j_m(t) = 
\begin{cases}
 { \mathcal{D} }^j_m(t)\text{ if }j\text{th job is served at time } t \\  
1 \text{  if } \text{otherwise,}
\end{cases}
\end{align} 
\begin{align}\label{DCN-eq:fict_dep_workload2}
\tilde{D}^l_m(t) = \sum_{j=1}^{\tilde{N}^l_m(t)}\tilde{\mathcal{D}}^j_m(t).
\end{align}
Now consider last term in \eqref{DCN-eq:cond-drift1} and use the similar approach as in obtaining \eqref{DCN-eq:N-bar} we get,
\begin{align}%\label{DCN-eq:D-tilde}
\sum_{m,l}&g(\tilde{J}(t))D^l_m(t) \nonumber \\
&\geq \sum_{m,l}g(\tilde{J}_m(t))\tilde{D}^l_m(t)-MLN_{\max}g(LN_{\max}S) \nonumber
\end{align}
Using above inequality in \eqref{DCN-eq:cond-drift1}
\begin{align}\label{DCN-eq:cond-drift2}
\E\Big[&\Delta(t)|Y(t)\Big] \nonumber \\
\leq& K_2+\sum_mg(\tilde{J}_m(t))\lambda_m\bar{S}_m-\E\Big[\sum_{m,l}g(\tilde{J}_m(t))\tilde{D}^l_m(t)|Y(t)\Big] \nonumber \\
=& K_2+\sum_mg(\tilde{J}_m(t))\lambda_m\bar{S}_m-\sum_{m,l}g(\tilde{J}_m(t))\tilde{N}^l_m(t) 
\end{align}
where $K_2=K_1+MLN_{\max}g(LN_{\max}S)$. The last equality is due to \eqref{DCN-eq:fict_dep_workload1} and \eqref{DCN-eq:fict_dep_workload2}. Now adding server running and job migration costs on both sides of \eqref{DCN-eq:cond-drift2} gives	
\begin{align} \label{DCN-eq:cond-drift3}
\E\Bigg[&\Delta(t)+\sum_l\Big(C^l_1(V\tilde{N}^l(t))\nonumber \\&
+ U\tilde{C}^l_2(N^l(t-1),Z^l(t-1),\tilde{N}^l(t))\Big)|Y(t)\Bigg] \nonumber \\
\leq& K_2+\sum_mg(\tilde{J}_m(t))\lambda_m\bar{S}_m-\E\Bigg[\sum_{m,l}g(\tilde{J}_m(t))\tilde{N}^l_m(t)\nonumber \\
&-U\sum_{m,a=0}^{M,S-2}\Big(N^l_{m,a}(t-1)-Z^l_{m,a}(t-1)-\tilde{N}^l_{m,a+1}(t)\Big)^+\nonumber \\
&-\sum_lV\mathbbm{1}_{\{\tilde{N}^l(t)\ne 0\}}|Y(t)\Bigg].
\end{align}
We know 
\begin{align}\label{DCN-eq:ieq1}
g(\tilde{J}_m(t)) \leq &g(S\tilde{Q}_m(t))  \nonumber\\
=& \log(1+S\tilde{Q}_m(t)) \nonumber\\ 
\leq& \log(S(1+\tilde{Q}_m(t))) \nonumber\\
=& \log(S)+g(1+\tilde{Q}_m(t)),
\end{align}
and
\begin{align}\label{DCN-eq:ieq2}
g(\tilde{J}_m(t)) \geq g(\tilde{Q}_m(t)).
\end{align}
Using \eqref{DCN-eq:ieq1} and \eqref{DCN-eq:ieq2} in \eqref{DCN-eq:cond-drift3}
\begin{align} \label{DCN-eq:cond-drift4}
\E\Bigg[&\Delta(t)+\sum_l\Big(C^l_1(V\tilde{N}^l(t))\nonumber \\&
+ U\tilde{C}^l_2(N^l(t-1),Z^l(t-1),\tilde{N}^l(t))\Big)|Y(t)\Bigg] \nonumber \\
\leq& K_3+\sum_mg(\tilde{Q}_m(t))\lambda_m\bar{S}_m-\E\Bigg[\sum_{m,l}g(\tilde{Q}_m(t))\tilde{N}^l_m(t)\nonumber \\
&-U\sum_{m,a=0}^{M,S-2}(N^l_{m,a}(t-1)-Z^l_{m,a}(t-1)-\tilde{N}^l_{m,a+1}(t))^+\nonumber \\
&-\sum_lV\mathbbm{1}_{\{\tilde{N}^l(t)\ne 0\}}|Y(t)\Bigg].
\end{align}
where $K_3 = K_2+M\log(S)$. Now proceeding as in the proof of Theorem~\ref{Theorem:Size-Algo-Optimality} (See the steps following \eqref{DCN-eq:drift+cost_1}) we obtain,
\begin{align} \label{DCN-eq:UK_Result1}
\E\Bigg[&\Delta(t)+\sum_l\Big(C^l_1(V\tilde{N}^l(t))\nonumber \\
&+ U\tilde{C}^l_2(N^l(t-1),Z^l(t-1)\tilde{N}^l(t))\Big)|Y(t)\Bigg] \nonumber \\
\leq& K_3+LUN_{\max}-\e\sum_mg(\tilde{Q}_m(t))+V\bar{C}_{\text{opt}}(\lambda+\e_g) \nonumber \\
\leq & K_3+LUN_{\max}-\e\sum_mg\Bigg(\frac{\tilde{J}_m(t)}{S}\Bigg)+V\bar{C}_{\text{opt}}(\lambda+\e_g).
\end{align}
Since the costs are non-negative we can write the above inequality as,
\begin{align}\label{DCN-eq:UK_stability}
\E\Big[&\Delta(t)|Y(t)\Big] \nonumber \\
\leq & K_3+LUN_{\max}-\e\sum_mg\Bigg(\frac{\tilde{J}_m(t)}{S}\Bigg)+V\bar{C}_{\text{opt}}(\lambda+\e_g).
\end{align}	
Define $\mathcal{M} = K_3+LUN_{\max}+V\bar{C}_{\text{opt}}(\lambda+\e_g)$. Since the job size is bounded, we can find $\mathcal{B}=\{x:\e\sum_mg(x/S) < \mathcal{M}\}$ so that the expected Lyapunov drift is negative whenever $\tilde{J}\in \mathcal{B}^c$ and the system is stable\cite{stheja-unknown}.
Summing \eqref{DCN-eq:UK_Result1} over $t = 0,1,...,T-1$, on \eqref{DCN-eq:cond-drift3} we obtain the following bound on mean server running cost
\begin{align}\label{DCN-eq:UK_opt_serv_cost}
\lim_{T \to \infty} \sum_{t=1}^{T}\sum_l \E\big[C^l_1(\tilde{N}^l(t))\big] \leq& \dfrac{K_3+LUN_{\max}}{V} \nonumber \\
&+\bar{C}_{\text{opt}}(\lambda+\e_g).
\end{align}
Above inequalities \eqref{DCN-eq:UK_stability} and \eqref{DCN-eq:UK_opt_serv_cost} prove part $(a)$ and part $(b)$ of the theorem. 

 $\text{      }\text{      }$The proof of part~($c$) goes on the similar lines as Theorem~\ref{Theorem:Size-Algo-Optimality} part~($c$). Here also we consider at every $t$ another algorithm that does not cause job migrations at all. We compare the weight of the VM configuration provided by the two algorithms at every time slot. Further, we take the time average to bound job migrations. Consider the job migrations at server $l$ at time $t$. 
\begin{align}%\label{DCN-eq:UK-Migs}
U&\sum_{t=1}^{T}\sum_{m,a=0}^{M,S-2}\Big(N^l_{m,a}(t-1)-Z^l_{m,a}(t-1)-\tilde{N}^l_{m,a+1}(t)\Big)^+ \nonumber \\
\leq & B +\sum_m\sum_{t=1}^{T}\tilde{N}^l_m(t)\Big[g(\tilde{Q}_m(t))-g(\tilde{Q}_m(t+1))\Big] \nonumber \\
\leq & B +\sum_m\sum_{t=1}^{T-1}\tilde{N}^l_m(t)\Big[g(\tilde{Q}_m(t))-g(\tilde{Q}_m(t)-N_{\max})\Big] \nonumber \\
\leq & B +\sum_m\sum_{t=1}^{T-1}\tilde{N}^l_m(t)\log\Big[\frac{1+\tilde{Q}_m(t)}{1+\tilde{Q}_m(t)-N_{\max}}\Big] \nonumber \\
\leq & B
+\sum_{t=1}^{T-1}MN_{\max}\log\Big[1+N_{\max}\Big] \nonumber \\
\leq & B+TMN_{\max}\log\Big[1+N_{\max}\Big] \nonumber
\end{align}
where $B=MN_{\max}g\big(TA_{\max}\big)$. Take time average and expectation of above,
\begin{align}
\frac{1}{T}&\sum_{m,a=0}^{M,S-2}\E\Bigg[\Big(N^l_{m,a}(t-1)-Z^l_{m,a}(t-1)-\tilde{N}^l_{m,a+1}(t)\Big)^+\Bigg] \nonumber \\
\leq & \frac{1}{U}\Bigg(\dfrac{MN_{\max}g\big(TA_{\max}\big)}{T}+MN_{\max}\log\Big(1+N_{\max}\Big)\Bigg) \nonumber \\
\leq & \frac{1}{U}\Bigg(MN_{\max}A_{\max}+MN_{\max}\log\Big(1+N_{\max}\Big)\Bigg) \nonumber 
\end{align}

\end{document}